\documentclass[sigconf,balance=false]{acmart}

\usepackage{algorithm}
\usepackage{algpseudocode}
\usepackage{booktabs}
\usepackage{multirow}
\usepackage{tabularx}
\usepackage{array}
\usepackage{graphicx}
\usepackage{placeins}
\usepackage{dblfloatfix}
\usepackage{afterpage}
\setcitestyle{numbers,sort&compress}
\newtheorem{proposition}{Proposition}
\algrenewcommand{\algorithmicrequire}{\textbf{Input:}}
\algrenewcommand{\algorithmicensure}{\textbf{Output:}}

\definecolor{PromptBorder}{HTML}{557A95}
\definecolor{PromptHeader}{HTML}{557A95}
\definecolor{PromptFill}{HTML}{F2F6F8}
\newcommand{\promptpanel}[2]{%
  \par\smallskip\noindent
  \fcolorbox{PromptBorder}{PromptFill}{%
    \begin{minipage}{\dimexpr\columnwidth-2\fboxsep-2\fboxrule\relax}
      \colorbox{PromptHeader}{%
        \parbox{\dimexpr\linewidth-2\fboxsep\relax}{\color{white}\bfseries #1}}%
      \vspace{3pt}\par
      {\scriptsize\ttfamily\raggedright #2\par}
    \end{minipage}}%
  \par\smallskip
}

\AtBeginDocument{%
  }

\setcopyright{none}
\acmConference[WWW '27]
{Proceedings of the ACM Web Conference 2027}
{May 10--14, 2027}
{Dublin, Ireland}

\begin{document}

\title[PolicyMediator]{PolicyMediator: Multi-Regional Policy Learning
through Mediation Agents with Evolving Skills in OPC Ecosystems}

\author{Jie Li}
\email{lijiestu@mail.nwpu.edu.cn}
\affiliation{%
  \institution{Northwestern Polytechnical University}
  \city{Xi'an}
  \country{China}
}
\author{Bin Guo}
\authornote{Corresponding authors.}
\affiliation{%
  \institution{Northwestern Polytechnical University}
  \city{Xi'an}
  \country{China}
}
\author{Georgina Curto}
\authornotemark[1]
\affiliation{%
  \institution{United Nations University Institute in Macau}
  \city{Macau}
  \country{China}
}
\author{Zikang Liu}
\affiliation{%
  \institution{Northwestern Polytechnical University}
  \city{Xi'an}
  \country{China}
}
\author{Mengqi Chen}
\affiliation{%
  \institution{Northwestern Polytechnical University}
  \city{Xi'an}
  \country{China}
}
\author{yu li}
\affiliation{%
  \institution{Northwestern Polytechnical University}
  \city{Xi'an}
  \country{China}
}

\begin{abstract}
One-person companies (OPCs) are emerging enterprises operated by individual entrepreneurs through digital platforms and LLM-based AI agents. These highly skilled entrepreneurs are characterized by being digital nomads, with limited reliance on local ecosystems. Governments in China use distinct regional support policies to attract mobile OPCs, turning local support into a cross-regional coordination problem. OPCs compare regions to maximize their own utility, but imperfect policy perceptions may cause them to reject otherwise suitable locations. Existing policy-learning methods optimize resources under fixed perceptions, while consensus-formation methods aggregate views without feeding cognitive changes back into policy learning. This disconnect produces mismatched location choices and inefficient regional policy adjustment. We introduce PolicyMediator, a cognition-constrained leader-multi-follower Stackelberg framework with bilateral mediation. The mediator explains verified policies to OPCs to build awareness and trust, and reports their needs and unresolved concerns to governments for policy learning. Its self-evolving skills retain, revise, or prune strategies using verification, allocation, capacity, and objective trajectories rather than mediation success alone. Across six-region simulations in China with five OPC types, five scenarios, and three OPC-simulator backbones, PolicyMediator ranks first in all 15 OPC-backbone--scenario combinations. Over the eight-baseline mean, it improves the aggregate government objective, combining economic performance, employment benefit, and policy-access equality, by $42.4\%$ and average OPC utility by $40.9\%$. 
\end{abstract}

\begin{CCSXML}
<ccs2012>
  <concept>
    <concept_id>10010147.10010178.10010219.10010220</concept_id>
    <concept_desc>Computing methodologies~Multi-agent systems</concept_desc>
    <concept_significance>500</concept_significance>
  </concept>
  <concept>
    <concept_id>10010147.10010257.10010258.10010261</concept_id>
    <concept_desc>Computing methodologies~Reinforcement learning</concept_desc>
    <concept_significance>500</concept_significance>
  </concept>
  <concept>
    <concept_id>10010405.10010455.10010460</concept_id>
    <concept_desc>Applied computing~Economics</concept_desc>
    <concept_significance>300</concept_significance>
  </concept>
</ccs2012>
\end{CCSXML}

\ccsdesc[500]{Computing methodologies~Multi-agent systems}
\ccsdesc[500]{Computing methodologies~Reinforcement learning}
\ccsdesc[300]{Applied computing~Economics}

\keywords{One-person companies, multi-regional policy learning,
Stackelberg policy learning, bilateral mediation, skill evolution}

\received{25 October 2026}

\maketitle
\section{Introduction}
\begin{figure}[t]
\centering
\includegraphics[width=1\columnwidth]{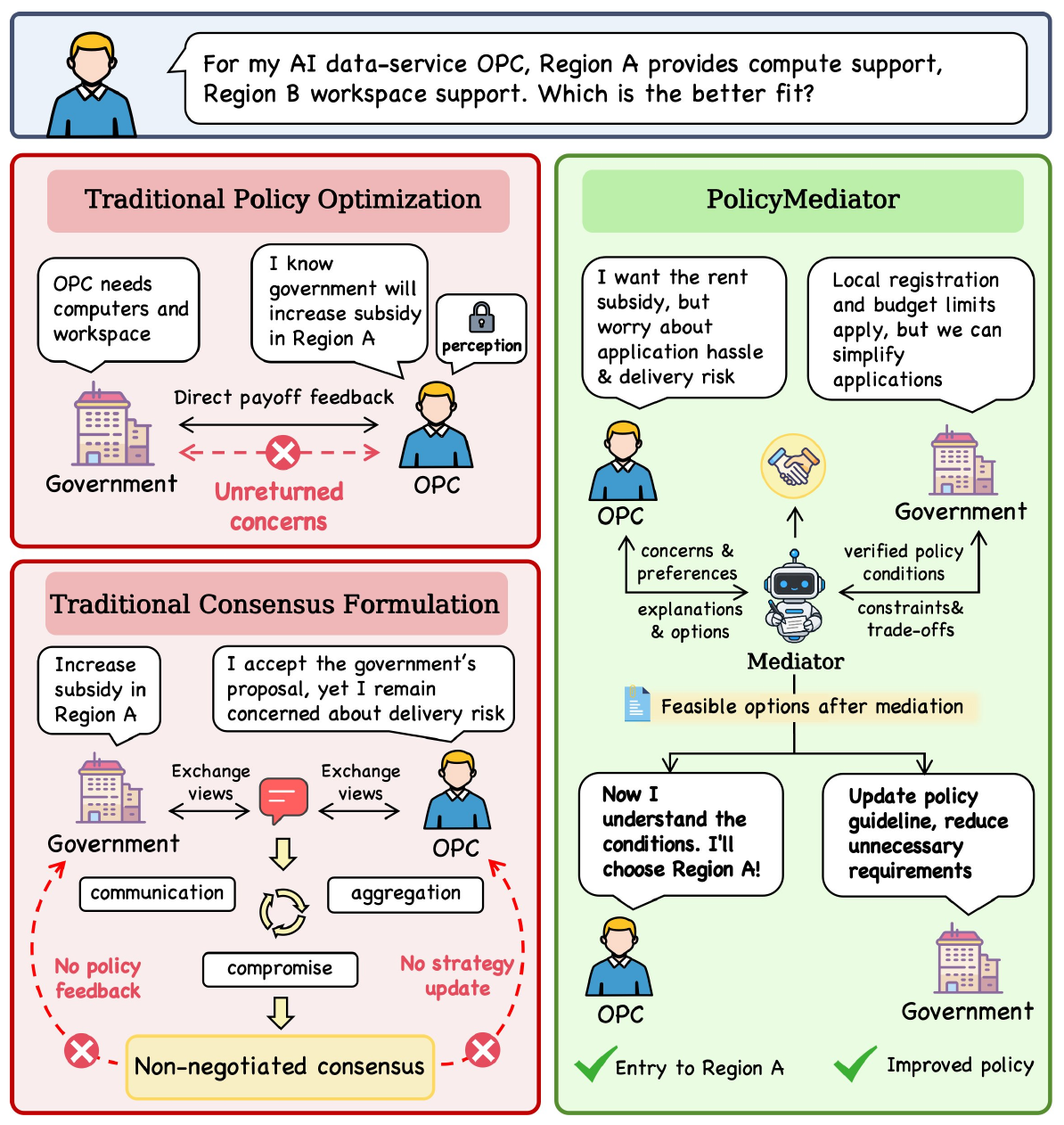}
\caption{Comparison of regional policy-coordination paradigms. Policy optimization excludes OPC concerns, while consensus formation lacks policy or cognition updates. PolicyMediator links verified mediation to OPC choices, policy revision, and skill evolution.}
\Description{Traditional policy optimization uses realized payoff feedback but leaves OPC concerns unreturned. Traditional consensus formation exchanges and aggregates views without policy feedback or strategy updates. PolicyMediator connects verified policy explanations and OPC concerns to feasible location choices and revised government policy.}
\label{fig:intro_comparison}
\end{figure}

LLM-based AI agents are reshaping firm organization by making individuals viable units of production\cite{kim2026generative}. One-person companies (OPCs) exemplify this organizational shift, as Web platforms, cloud services, and generative AI lower the costs of acquiring customers, accessing professional services, and completing complex tasks\cite{alyakoob2026market,guo2025skill}. Governmental organizations across China introduce support policies and entrepreneurial hubs to make OPC development a priority for entrepreneurship, employment, and AI-enabled industries. Platform design and digital work infrastructure shape microentrepreneurial participation\cite{alyakoob2026market,doggett2024digital}, while local labor-market conditions influence who benefits\cite{guo2025skill,jones2024gendered}. OPCs therefore operate in a Web-enabled entrepreneurial ecosystem, using Web-based tools and resources to work across regions, access policy information, and relocate with limited reliance on local infrastructure and teams.

OPCs’ cross-regional location choices make the effects of regional support policies interdependent. Regional governments in China publish distinct support packages to attract mobile OPCs, while each OPC compares regions and selects the location maximizing expected utility. Yet these choices rely on incomplete, uncertain, and trust-dependent policy perceptions. Policy disclosure and mediation shape what an OPC knows, expects to be delivered, and regards as relevant; entry and relocation alter regional conditions and policy value. Prior work shows that policy-information frictions affect firm participation in government support programs and the allocation of assistance~\cite{humphries2020information}, infrastructure shapes digital-market access~\cite{leong2024coordination,agarwal2024mobile}, strategic disclosure redirects participation~\cite{liu2024unintended}, and mediation identifies common ground without overriding independent decisions~\cite{wall2001mediation,susskind2018deliberative}. This creates an agent-mediated socio-economic system in which the challenge is to coordinate regional policy learning with OPCs' evolving policy perceptions and location choices to improve government outcomes and OPC utility.

Existing methods to solve these challenges broadly fall into two categories. Policy-simulation methods represent heterogeneous governments and firms with agent-based models\cite{farmer2009economy,li2024econagent}, then use evolutionary games or reinforcement learning to evaluate policy and allocation\cite{zheng2022aieconomist,huang2026policysim}. They often fix policy perceptions or infer needs only from choices, so implementation concerns may be mistaken for inadequate support. Consensus methods exchange and aggregate opinions\cite{tessler2024common}, while mediation diagnoses disagreement and establishes shared facts\cite{wall2001mediation,susskind2018deliberative,chen2026mediation}. They often reach local agreement through compromise, but rarely update decision-relevant cognition, feed consensus into policy learning, or adapt mediation strategies. Figure~\ref{fig:intro_comparison} illustrates these gaps: policy optimization reacts to realized payoffs but leaves perception-related concerns unreturned, while consensus formation aggregates exchanged views without revising policy or its mediation strategy. Cross-regional coordination instead requires adaptive consensus through bilateral mediation, namely verified common ground and feasible options that evolve with policy conditions, OPC cognition, and allocation outcomes.

Therefore, Addressing this coordination problem presents two key challenges:
\begin{itemize}
 \item How can policy learning distinguish cognitive barriers from policy mismatches across regions?
\item How can mediation build adaptive consensus while balancing OPC interests, economic performance, and welfare?
\end{itemize}

We model this interplay by combining heterogeneous firm profiles with cognitive pathways of policy acceptance\cite{aguilera2025population}. OPC cognition captures what firms know, expect to be delivered, and perceive as relevant, thereby shaping approximately rational entry and relocation. Regional governments are modeled as a single cooperative leader that optimizes each region's policy under observed policy-intensity bounds and congestion effects. Bilateral mediator uses explanations grounded in verified policy sources to improve OPC decisions, and translates OPC needs and unresolved concerns into structured feedback for government policy updates. Mediator agents retain evolving skills only when their allocation and objective effects remain beneficial. As shown in  Figure~\ref{fig:intro_comparison}, the mediator combines verified policy conditions with OPC concerns and preferences to produce feasible location options and policy revisions.

We propose PolicyMediator, which integrates multi-region Stackelberg policy learning with bilateral mediation. In each round, governments publish regional policy packages, and mediators help OPCs assess feasible options under verified conditions. OPCs then independently choose a region or non-entry based on their updated policy understanding and expected utility. The resulting allocations, government outcomes, and unresolved OPC concerns inform policy learning and mediation skill updates for the next round. This closes the loop between government decisions, mediated cognition, and OPC responses. The process continues until policies, allocations, and both government and OPC objectives stabilize. Consensus concerns verified facts and workable options; it permits disagreement and does not require unanimous policy acceptance.

Our contributions can be summarized as follows:
\begin{itemize}
\item To our knowledge, we are the first to formulate heterogeneous OPC entry and location choice as a cognition-constrained leader--multi-follower Stackelberg model that includes bounded responses, policy interdependence, congestion, and equal policy access.
\item We couple policy learning with bilateral mediation and outcome-aware skill evolution, allowing explanations to influence OPC choices and structured concerns to improve subsequent government policies.
\item Across six Chinese regions, five scenarios, and three OPC backbones, PolicyMediator ranks first on both metrics in all 15 settings, improving the government objective/OPC utility by $42.4\%/40.9\%$ over the eight-baseline mean and $6.3\%/15.3\%$ over the strongest baseline.
\end{itemize}
\vspace{-10pt}

\section{Related Work}
\textbf{Multi-Agent Policy Making.} Multi-agent policy making uses agent-based models to characterize heterogeneous actors\cite{farmer2009economy}, while intervention systems optimize evolving networks and urban plans\cite{papachristou2023dynamic,yong2026intelliplanner}. For strategic interactions, game-theoretic methods learn leader--follower policy responses\cite{bai2021sample,zhong2023stackelbergnash}, while reinforcement learning optimizes economic policies and taxation\cite{zheng2022aieconomist,mi2024taxai,zhang2024noregret}. LLM agents enrich simulated societies with memory, reflection, and planning\cite{park2023generative,li2024econagent}, while PolicySim anticipates behavioral responses to policy\cite{huang2026policysim}. With fixed policy perceptions, these methods cannot distinguish policy mismatch from cognition-driven rejection. PolicyMediator instead links mediated cognitive updates to allocation and subsequent policy learning.

\textbf{Consensus Formation and Mediation.} Consensus formation coordinates beliefs and interests through opinion-dynamics models of disagreement, coevolving beliefs, polarization reduction, and bi-level consensus under social influence\cite{bindel2011opinion,auletta2019consensus,musco2018polarization,tu2024bilevel}. Beyond opinion dynamics, mediation diagnoses disagreement and structures negotiation\cite{wall2001mediation}, while joint fact-finding establishes shared evidence before bargaining\cite{susskind2018deliberative}. Computational approaches support mediation by identifying common ground\cite{tessler2024common}, organizing dispute resolution\cite{chen2026mediation}, and coordinating LLM exchanges\cite{pitre2025consensagent}. Yet opinion aggregation and compromise can trade away individual interests without correcting the cognition that drives later decisions. PolicyMediator preserves heterogeneous choices and evaluates policy updates by their downstream government objective.

\begin{figure*}[t]
\centering
\includegraphics[width=0.96\textwidth]{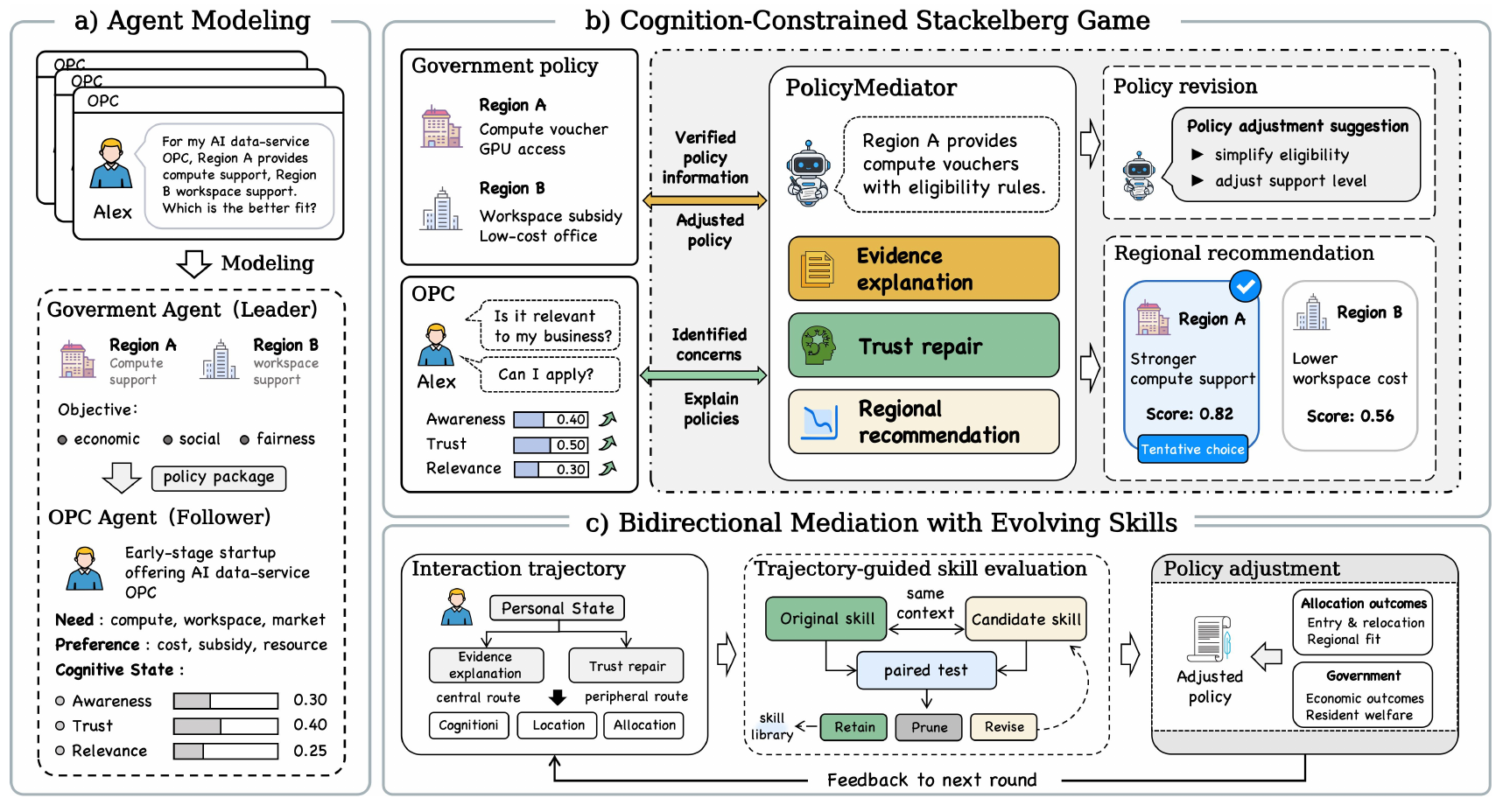}
\caption{PolicyMediator framework: (a) agent modeling; (b) cognition-constrained Stackelberg interaction with policy revision; and (c) trajectory-guided skill evaluation and next-round policy adjustment.}
\Description{Three-panel framework. The first panel models regional governments and heterogeneous OPCs. The second connects verified regional policies, mediation, cognition updates, location recommendations, and policy revision. The third evaluates original and candidate skills on matched trajectories, retains, revises, or prunes them, and feeds allocation and government outcomes into the next policy round.}
\label{fig:policymediator_framework}
\end{figure*}


\textbf{Self-Evolving Skills.} Self-evolving agents adapt reusable skills through accumulated experience. Foundational approaches formalize extended actions as options\cite{sutton1999options}, discover diverse behaviors through unsupervised learning\cite{eysenbach2019diayn}, and address retention and adaptation through continual learning\cite{wang2024continual}. Building on these foundations, language agents reflect on trajectories\cite{shinn2023reflexion}, refine outputs\cite{madaan2023selfrefine}, and learn from collaborative feedback\cite{zhang2024agentcf}. Recent skill libraries further support experience-driven revision and composition by organizing reusable abstractions and implementations\cite{yu2026polyskill,yang2026evods,wang2026skilllibrary,jiao2026agentic}. Task-success-driven evolution may retain persuasive skills with harmful downstream effects. PolicyMediator instead evaluates skills by factual validity and downstream system outcomes before retaining, revising, or pruning them.

\section{Problem Formulation}
\subsection{Framework Overview}
Figure~\ref{fig:policymediator_framework} presents the proposed PolicyMediator framework. Panel (a) models regional governments and heterogeneous OPCs. With regional policies centrally coordinated, governments across \(\mathcal R\) form a single cooperative leader while retaining distinct local policies. In panel (b), the leader publishes a feasible policy profile \(\mathbf p^k=(\mathbf p_r^k)_{r\in\mathcal R}\); mediators verify and explain its conditions, elicit concerns, and update cognition before OPCs choose a region or non-entry. Allocations and government outcomes inform policy revision. Panel (c) uses cognition, location, and allocation trajectories to compare original and candidate skills in matched contexts, retain, revise, or prune them, and feed evaluated outcomes into the next policy cycle.

Within each policy cycle, time is discretized into OPC decision steps $t\geq0$, with $k=\lfloor t/H\rfloor$ indexing $H$-step policy cycles.
Governments hold policies fixed within a cycle while OPCs update their choices. Let $x_{ir}^t=\mathbf1(x_i^t=r)$ and $x_i^t=(x_{ir}^t)_{r\in\mathcal R}$. For cycle $k$, $\mathcal T_k^\star$ is the longest suffix ending at $(k+1)H-1$ such that, at every transition, at most an $\epsilon$ fraction of OPCs switch regions. We write $M_k=|\mathcal T_k^\star|$ and use $\langle\cdot\rangle_k$ for averages over this stabilized window.

\subsection{Government agents.} Each government $G_r$ selects cash, finance, computing and data, space, talent, and market-access policies. Its package is $\mathbf p_r^k=(\mathbf y_r^k,\boldsymbol\theta_r^k)$, where $y_{rj}^k\in\{0,1\}$ indicates activation and $\theta_{rj}^k$ specifies intensity. For each category $j=1,\ldots,6$, the observed policy-derived bound $\widehat{\theta}_{rj}^{\mathrm{obs}}$ defines feasibility:
\begin{equation}
\mathbf p_r^k\in\mathcal F_r
\quad\Longleftrightarrow\quad
y_{rj}^k\in\{0,1\},\qquad
0\leq\theta_{rj}^k\leq
y_{rj}^k\widehat{\theta}_{rj}^{\mathrm{obs}}.
\label{eq:policy-feasible-set}
\end{equation}
At $k=0$, each region is initialized with its observed policy profile. Appendix~\ref{app:data-construction} details the construction, scaling, and initialization of these policy quantities.

\subsection{OPC agents and regional attributes.}

Each OPC chooses $a_i^t\in A=\mathcal R\cup\{0\}$, where $0$ denotes non-entry. Its type $l_i$ determines policy sensitivity through $\Gamma=(\gamma_{lj})\in[0,1]^{5\times6}$; Appendix~\ref{app:prompts} specifies the profile fields and the procedure for initializing cognition from verified regional evidence. Because acceptance depends on heterogeneous attributes and how policy information is processed~\cite{aguilera2025population,petty1986elm,gefen2003trust}, we represent policy-information-induced cognitive barriers by awareness $z_i^t$, trust in policy delivery $b_i^t$, and region-specific relevance $q_{ir}^t=\frac16\sum_{j=1}^6q_{irj}^t$, the mean of six policy-category perceptions for region $r$. Thus, $\Psi_{ir}^{\mathrm{opc},t}=(z_i^t,b_i^t,q_{ir}^t)$ and $\boldsymbol\Psi_i^t=(z_i^t,b_i^t,(q_{ir}^t)_{r\in\mathcal R})\in[0,1]^{2+|\mathcal R|}$.

Initial cognition $\boldsymbol\Psi_i^0$ is generated accordingly. At step $t$, the LLM analysis operator $\mathcal A_{\mathrm{LLM}}$ maps the profile, interaction history $\mathcal H_i^{\leq t}$, dialogue $\mathcal D_i^t$, and evidence $\mathcal S^k$ to the next cognitive update:
\begingroup
\setlength{\abovedisplayskip}{5pt plus 1pt minus 1pt}
\setlength{\abovedisplayshortskip}{4pt plus 1pt minus 1pt}
\setlength{\belowdisplayskip}{5pt plus 1pt minus 1pt}
\setlength{\belowdisplayshortskip}{5pt plus 1pt minus 1pt}
\begin{equation}
\begin{aligned}
\Delta\boldsymbol\Psi_i^{t+1}
&=\boldsymbol\delta_\Psi\odot\tanh\!\left(\mathcal A_{\mathrm{LLM}}
\left(\mathrm{profile}_i,\mathcal H_i^{\leq t},
\mathcal D_i^t,\mathcal S^k\right)\right),\\
\boldsymbol\Psi_i^{t+1}
&=\Pi_{[0,1]^{2+|\mathcal R|}}
\left(\boldsymbol\Psi_i^t
+\Delta\boldsymbol\Psi_i^{t+1}\right).
\end{aligned}
\label{eq:cognitive-transition}
\end{equation}
\endgroup
Here $\boldsymbol\delta_\Psi=(.15,.12,(.10)_{r\in\mathcal R})$ permits bidirectional updates, while $\mathcal S^k$ faithfully discloses verified negative as well as positive policy information. The second line accumulates these changes and projects every cognitive component back to $[0,1]$.

Because location value depends on earnings, industrial compatibility, administrative costs, and congestion, region $r$ is represented by baseline payoff $v_r$, type fit $m_{lr}$, application difficulty $d_r$, and capacity $K_r$; Appendix~\ref{app:data-construction} gives their construction. Type $l$ receives pre-policy return $m_{lr}v_r$. To represent OPC choice across regions, market-access support reduces cross-regional entry or relocation costs to $K_r^{{entry},k}=\max(d_r-\theta_{r6}^k,0)$.

\vspace{-0.35\baselineskip}
\section{Cognition-Constrained Stackelberg Game}
\subsection{Objective Function and Constraints}
\textbf{Objective Function of the OPCs.} Given $\mathbf p^k$, OPC $i$ evaluates region $r$ using its perceived policy benefit and the capacity penalty:
\begin{equation}
\begin{aligned}
B_{ir}^k&=\frac{1}{6}\sum_{j=1}^{6}\gamma_{l_i j}\theta_{rj}^k,\\
g_{ir}^t&=b_i^t\sqrt{z_i^tq_{ir}^t},\\
P_r^{\mathrm{cap},t-1}
&=\left[(n_r^{t-1}-K_r)/K_r\right]_+^2,\\
W_{ir}^{\mathrm{opc},t}
&=m_{l_i r}v_r+g_{ir}^tB_{ir}^k-K_r^{\mathrm{entry},k}
-P_r^{\mathrm{cap},t-1}.
\end{aligned}
\label{eq:opc-objective}
\end{equation}
Here, $W_{ir}^{\mathrm{opc},t}$ is OPC $i$'s perceived payoff, $g_{ir}^t\in[0,1]$ scales only its expected policy benefit, and $n_r^{t-1}=\sum_i x_{ir}^{t-1}$ is the latest observed occupancy. Initially, all OPCs begin with non-entry. Thus, OPCs respond to observed congestion. Reported OPC utility uses the same payoff with $g_{ir}^t$ replaced by policy fulfillment rate $\rho_r^k$ (default $1$), so cognition performance alone cannot raise evaluated outcomes. Following aggregate congestion modeling~\cite{rosenthal1973congestion}, the convex penalty is zero within $K_r$ and quadratic above it. We model OPC $i$'s boundedly rational choice using a logit response with temperature $\tau_i>0$:
\begin{equation}
\begin{aligned}
\rho_i^t(r)
&:=\Pr(a_i^t=r\mid s^t,\mathbf p^k)
=\frac{\exp\!\left(W_{ir}^{\mathrm{opc},t}/\tau_i\right)}
{\displaystyle\sum_{s\in A}
\exp\!\left(W_{is}^{\mathrm{opc},t}/\tau_i\right)},
\\
\widehat a_i^t
&\in\arg\max_{r\in A}\rho_i^t(r)
=\arg\max_{r\in A}W_{ir}^{\mathrm{opc},t}.
\end{aligned}
\label{eq:logit-response}
\end{equation}
Thus, $\widehat a_i^t$ is the highest-payoff city (or non-entry), while $a_i^t\sim\rho_i^t$ is realized. 
As $\tau_i\to 0$, the response approaches this maximum; positive temperature smooths response probabilities with respect to OPC payoffs, facilitating leader policy optimization.

\textbf{Objective Function of the Governments.} Following SDG~8 on productive employment and SDG~10 on equal opportunity and reduced inequalities of outcome~\cite{unitednations2015agenda}, we divide the government objective into economic performance and welfare; welfare comprises employment benefits and equal policy access across OPC types:
\begin{equation}
W^{G,k}
=\frac{1}{3|\mathcal R|}\sum_{r\in\mathcal R}
\left(L_r^{\mathrm E,k}+L_r^{\mathrm{emp},k}+L_r^{\mathrm{eq},k}\right),
\end{equation}
where the three components are placed on bounded, comparable scales and averaged equally. Reported $W^{G,k}$ values are scaled by $\zeta_W$. Let $\widehat R_{\max}=\max_{l,s}\widehat R_{ls}$ and $\bar u_{lr}^k=\frac{\rho_r^k}{6}\sum_j\gamma_{lj}\theta_{rj}^k$:
\begin{equation}
\begin{aligned}
L_r^{\mathrm E,k}
&=\frac{1}{NM_k}\sum_{t\in\mathcal T_k^\star}\sum_{i\in\mathcal N}
x_{ir}^tm_{l_i r}v_r-\widetilde C_r^{\mathrm{dir},k},\\
L_r^{\mathrm{emp},k}
&=\frac{\left\langle\sum_{i\in\mathcal N}x_{ir}^t
\widehat R_{l_i r}\right\rangle_k}{N\widehat R_{\max}+\epsilon_R},\\
L_r^{\mathrm{eq},k}
&=\frac{\left(\sum_{l=1}^{5}\bar u_{lr}^k\right)^3}
{25\sum_{l=1}^{5}(\bar u_{lr}^k)^2}.
\end{aligned}
\label{eq:welfare-components}
\end{equation}
The terms capture net economic performance, labor compensation, and equal policy access; $\widetilde C_r^{\mathrm{dir},k}$ is the direct fiscal expenditure of region $r$'s package, and the latter two terms define welfare. $L_{r}^{\mathrm{eq},k}$ measures equal policy access by multiplying Jain's fairness index by the mean policy benefit across OPC types. Appendix~\ref{app:data-construction} defines the proxies, normalization, and ranges.

\subsection{Response-Induced Markov Decision Process}
At cycle boundary $k$, the Markov state $s^k$ contains policies, allocation, cognition, regional conditions, and skill state, with $\mathbf o^k=\Omega(s^k)$, where $\Omega$ is the deterministic regional-observation map. Given feasible action $\mathbf u$, conditionally independent OPC choices induce $\rho(\mathbf a\mid s,\mathbf u)=\prod_i\rho_i(a_i\mid s,\mathbf u)$. Stackelberg here is generalized: a leader–response kernel in which followers respond boundedly rationally and payoffs
depend on aggregate state.

Following the leader--follower construction~\cite{zhong2023stackelbergnash}, integrating $W^G$ and the transition kernel over $\rho$ induces the return of the single cooperative leader
\begin{equation}
\begin{aligned}
J_G^\rho(\pi_G;s)
=\mathbb E\!\Bigg[&
\sum_{k=0}^{\infty}\delta_G^k W^{G,k}
\,\Bigm|\,s^0=s,\\[-2pt]
&\mathbf u^k=\pi_G(s^k),\ 
\mathbf a^k\sim\rho(\cdot\mid s^k,\mathbf u^k)
\Bigg].
\end{aligned}
\label{eq:leader-induced-return}
\end{equation}
The ideal response-conditioned policy is $\pi_G^\star\in\arg\max_{\pi_G\in\Pi_G}J_G^\rho(\pi_G;s)$.
Before developing an approximate solution, we establish conditions under which a stationary optimal leader policy exists.

\begin{proposition}[Stationary optimal leader policy under a logit response kernel]
\label{prop:stationary-leader-policy}
For the structured-state abstraction, if the state and feasible action spaces are compact, OPC payoffs are continuous, the response-averaged reward is bounded and upper semicontinuous, the response-averaged transition kernel is weakly continuous, $\tau_i>0$, and $\delta_G<1$, then a stationary optimal leader policy $\pi_G^\star$ exists.
\end{proposition}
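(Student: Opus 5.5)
Because the OPCs respond through the logit kernel in \eqref{eq:logit-response}, we can fold their responses into the leader's problem. This turns it into a single-agent discounted Markov decision process on the structured state space, with compact state space $S$ and compact feasible action sets $\mathcal F(s)\subseteq\prod_r\mathcal F_r$. We then invoke the classical existence theorem for discounted MDPs with weakly continuous kernels and upper semicontinuous bounded rewards (Bertsekas--Shreve / Hern\'andez-Lerma--Lasserre), following the leader--follower reduction of \cite{zhong2023stackelbergnash} that yields \eqref{eq:leader-induced-return}.

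The first step is to define the response-averaged reward and kernel:
\[
\bar r(s,\mathbf u)=\sum_{\mathbf a}\rho(\mathbf a\mid s,\mathbf u)\,W^{G}(s,\mathbf u,\mathbf a),\qquad
\bar P(\cdot\mid s,\mathbf u)=\sum_{\mathbf a}\rho(\mathbf a\mid s,\mathbf u)\,P(\cdot\mid s,\mathbf u,\mathbf a).
\]
We then note that $J_G^\rho(\pi_G;s)$ coincides with the discounted return of the MDP $(S,\mathcal F,\bar r,\bar P,\delta_G)$. The role of $\tau_i>0$ and the continuity of $W^{\mathrm{opc}}$ is to make each factor $\rho_i(a_i\mid s,\mathbf u)$ continuous in $(s,\mathbf u)$, since it is a ratio of exponentials with a positive denominator. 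Because $A$ is finite, the product $\rho$ is a continuous finite mixture. This explains why the hypotheses on $\bar r$ and $\bar P$ are natural: as $\tau_i\to0$ the argmax response would be discontinuous and could break upper semicontinuity. In the proof we simply use the assumed boundedness and upper semicontinuity of $\bar r$ and the weak continuity of $\bar P$. The feasible-action correspondence is constant, so it is compact-valued and continuous. The binary activation components $y_{rj}$ give a finite union of boxes by \eqref{eq:policy-feasible-set}, which is still compact.

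Next we consider the Bellman operator on the space $USC_b(S)$ of bounded upper semicontinuous functions:
\[
(TV)(s)=\max_{\mathbf u\in\mathcal F(s)}\Big[\bar r(s,\mathbf u)+\delta_G\int V(s')\,\bar P(ds'\mid s,\mathbf u)\Big].
\]
We need three facts.
\begin{itemize}
\item[(i)] For $V\in USC_b$, the map $(s,\mathbf u)\mapsto\int V\,d\bar P$ is upper semicontinuous. To see this, write $V$ as a decreasing limit of bounded continuous functions (Baire); weak continuity makes each integral continuous, and a monotone infimum of continuous functions is upper semicontinuous.
\item[(ii)] Maximizing an upper semicontinuous function over a compact-valued continuous correspondence yields an upper semicontinuous value and admits a Borel measurable maximizer selector, by the Berge-type measurable selection theorem.
\item[(iii)] $T$ is a $\delta_G$-contraction in the sup norm, and $USC_b$ is closed under uniform limits.
\end{itemize}
Banach's fixed-point theorem then gives a unique $V^\star\in USC_b$ with $TV^\star=V^\star$. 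Let $\pi_G^\star$ be the measurable maximizer of $TV^\star$. A standard verification argument, namely iterating the Bellman equation and letting the tail $\delta_G^n\|V^\star\|_\infty\to0$, shows that $J_G^\rho(\pi_G^\star;s)=V^\star(s)\ge J_G^\rho(\pi_G;s)$ for every (possibly history-dependent) $\pi_G$.

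We expect the main obstacle to be step (i)--(ii): carrying upper semicontinuity through the integral and obtaining a measurable argmax selector, rather than just a value. I would handle (i) by the monotone approximation argument above. For (ii) I would cite the measurable selection theorem for upper semicontinuous objectives on compact-valued correspondences, which also requires $S$ to be a compact metric (hence Borel) space. A secondary point is to restrict attention to the structured-state abstraction, where the LLM cognition update in \eqref{eq:cognitive-transition} enters only through $\bar P$, so the proof relies solely on the assumed weak continuity and needs nothing about the internals of the LLM.
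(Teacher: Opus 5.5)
Your proposal is correct and follows the paper's proof essentially step for step. Like the paper, you fold the logit kernel into a response-averaged reward $\bar r$ and kernel $\bar P$, use the $\delta_G$-contraction of the Bellman operator, take a measurable maximizing selector at the fixed point, and identify that fixed point with $J_G^\rho$; your choice to run the contraction on bounded upper semicontinuous functions, using Baire approximation in step (i), is slightly more careful than the paper, which asserts that the maximum is attained while stating the contraction only on general bounded value functions.
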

The proof is provided in Appendix~\ref{app:proof-stationary-leader}. It establishes a theoretical target for the approximate policy learning developed next.

\subsection{Cooperative Policy Learning}
We use cooperative multi-actor zeroth-order policy-gradient learning to optimize regional policies, as mediated OPC responses make the simulator non-differentiable. Each actor $\pi_{\omega_r}$ maps local observation $o_r^k$ to adjustment $\Delta\boldsymbol\theta_r^k$, sharing the government objective as reward. Algorithm~\ref{alg:government-opc} shows paired parameter perturbations producing rollouts whose return differences update actors.

\textbf{Government Observations.} At boundary $t=kH$, $k\geq1$, government $G_r$ observes the previous policy and completed-cycle outcomes:
\begin{equation}
o_r^k=
\left(
\boldsymbol\theta_r^{k-1},
\bar{\mathbf x}_r^{k-1},
L_r^{\mathrm E,k-1},
L_r^{\mathrm{emp},k-1},
L_r^{\mathrm{eq},k-1},
\widetilde C_r^{\mathrm{dir},k-1},
W^{G,k-1}
\right),
\end{equation}
where $\bar{\mathbf x}_r^{k-1}=\langle(x_{ir}^t)_{i\in\mathcal N}\rangle_{k-1}$ and $\mathbf o^k=(o_r^k)_{r\in\mathcal R}$.

\textbf{Policy Generation.} Each regional government $G_r$ uses its deterministic actor $\pi_{\omega_r}$ to generate a bounded adjustment and projects the updated intensity onto the feasible set:
\begin{equation}
\begin{aligned}
\Delta\boldsymbol\theta_r^k
&=\operatorname{clip}\!\left(
\pi_{\omega_r}(o_r^k),
-\delta_\theta^{\max},\delta_\theta^{\max}
\right),\\
\boldsymbol\theta_r^k
&=\operatorname{Proj}_{\mathcal F_r^\theta}\!\left(
\boldsymbol\theta_r^{k-1}+\Delta\boldsymbol\theta_r^k
\right).
\end{aligned}
\label{eq:projected-policy-update}
\end{equation}
Clipping is componentwise, $0<\delta_\theta^{\max}\leq1$. Projection onto $\mathcal F_r^\theta$ sets $y_{rj}^k=\mathbf1(\theta_{rj}^k>0)$ and yields the package committed throughout cycle $k$; the pre-projection adjustment is the actor output, and projection belongs to the environment transition.

\textbf{Zeroth-Order Policy Learning.} Let $\boldsymbol\omega=(\omega_r)_{r\in\mathcal R}$ contain $D$ jointly perturbed actor parameters. At cycle $k$, $J_k(\boldsymbol\omega)$ is the expected rollout return from the saved simulator state, including its current allocation, cognition, and skill library. Following Gaussian zeroth-order optimization~\cite{nesterov2017gradientfree,ghadimi2013zerothorder}, define $J_{k,h}(\boldsymbol\omega)=\mathbb E_{\boldsymbol\zeta}[J_k(\boldsymbol\omega+h\boldsymbol\zeta)]$ for $\boldsymbol\zeta\sim\mathcal N(\mathbf0,I_D)$. Joint, two-sided rollouts start from the same saved state and use matched random seeds:
\begin{equation}
\begin{aligned}
&\widehat{\mathbf g}_k
=\frac{\widehat J_k(\boldsymbol\omega^k+h\boldsymbol\zeta^k)-\widehat J_k(\boldsymbol\omega^k-h\boldsymbol\zeta^k)}{2h}\,\boldsymbol\zeta^k,\\[-2pt]
&\omega_r^{k+1}=\omega_r^k+\alpha[\widehat{\mathbf g}_k]_r,
\end{aligned}
\label{eq:zeroth-order-policy-gradient}
\end{equation}
where each rollout recomputes mediation, skill evolution, OPC responses, congestion, and $W^G$ from the common checkpoint.
Since returns are stochastic and simulator states evolve, we analyze gradient estimation and stationarity under cross-cycle drift.
\begin{proposition}[Zeroth-order gradient estimation and stationarity]
\label{prop:zeroth-order-stationarity}
If paired returns are unbiased and lie in an interval of width $R$, then $\mathbb E_k[\widehat{\mathbf g}_k]=\nabla J_{k,h}(\boldsymbol\omega^k)$, $L_h\leq R/(2h^2)$, and the conditional estimator variance satisfies $\sigma_g^2\leq R^2D/(4h^2)$. Let $V_T=\sum_{k<T-1}\sup_{\boldsymbol\omega}|J_{k+1,h}(\boldsymbol\omega)-J_{k,h}(\boldsymbol\omega)|$. For $\alpha\leq1/L_h$,
\begin{equation}
\frac1T\sum_{k<T}\mathbb E\|\nabla J_{k,h}(\boldsymbol\omega^k)\|^2
\leq \frac{2(\Delta+V_T)}{\alpha T}+L_h\alpha\sigma_g^2,
\label{eq:zeroth-order-stationarity}
\end{equation}
where $\Delta=\sup_{k,\boldsymbol\omega}J_{k,h}(\boldsymbol\omega)-J_{0,h}(\boldsymbol\omega^0)$.
\end{proposition}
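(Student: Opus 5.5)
The plan follows the standard Gaussian-smoothing analysis of Nesterov and Ghadimi--Lan, with one extra telescoping term for the cross-cycle drift. It has three steps: unbiasedness of the estimator, the smoothness and variance constants, and a drifting-objective descent lemma.

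\textbf{Unbiasedness.} Condition on $\boldsymbol\omega^k$ and the saved checkpoint. Because the paired returns are unbiased, $\mathbb E_k[\widehat J_k(\boldsymbol\omega^k\pm h\boldsymbol\zeta)\mid\boldsymbol\zeta]=J_k(\boldsymbol\omega^k\pm h\boldsymbol\zeta)$. The symmetry $\boldsymbol\zeta\mapsto-\boldsymbol\zeta$ then gives
\[
\mathbb E_k[\widehat{\mathbf g}_k]=\tfrac1h\,\mathbb E_{\boldsymbol\zeta}[J_k(\boldsymbol\omega^k+h\boldsymbol\zeta)\boldsymbol\zeta].
\]
Stein's identity (Gaussian integration by parts) applied to the density of $\boldsymbol\omega^k+h\boldsymbol\zeta$ shows that this equals $\nabla J_{k,h}(\boldsymbol\omega^k)$.

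Note that this identity needs only measurability and boundedness of $J_k$. Boundedness is guaranteed because returns lie in an interval of width $R$. No differentiability of the non-smooth simulator is used.

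\textbf{Constants.} Shift $J_k$ by the interval midpoint $c$. This changes neither the gradient nor the Hessian of $J_{k,h}$, so we may assume $|J_k-c|\le R/2$.

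Differentiate the Gaussian convolution twice:
\[
\nabla^2J_{k,h}(\boldsymbol\omega)=h^{-2}\,\mathbb E[(J_k(\boldsymbol\omega+h\boldsymbol\zeta)-c)(\boldsymbol\zeta\boldsymbol\zeta^\top-I)].
\]
For a unit vector $v$, bound $|v^\top\nabla^2J_{k,h}v|\le h^{-2}(R/2)\,\mathbb E|(v^\top\boldsymbol\zeta)^2-1|$. Since $\mathbb E|\chi^2_1-1|\le 1$, this yields $L_h\le R/(2h^2)$.

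For the variance, the paired return difference has absolute value at most $R$. Hence
\[
\mathbb E\|\widehat{\mathbf g}_k\|^2\le \frac{R^2}{4h^2}\,\mathbb E\|\boldsymbol\zeta\|^2=\frac{R^2D}{4h^2},
\]
which bounds the conditional variance $\sigma_g^2$. I expect the only delicate point here to be constant bookkeeping; the stated constants are loose enough that these crude bounds suffice.

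\textbf{Descent with drift.} This is the main step. Since $J_{k,h}$ is $L_h$-smooth and the update is ascent,
\[
J_{k,h}(\boldsymbol\omega^{k+1})\ge J_{k,h}(\boldsymbol\omega^k)+\alpha\langle\nabla J_{k,h},\widehat{\mathbf g}_k\rangle-\tfrac{L_h\alpha^2}{2}\|\widehat{\mathbf g}_k\|^2.
\]
Take the conditional expectation and use $\mathbb E\|\widehat{\mathbf g}_k\|^2=\|\nabla J_{k,h}\|^2+\sigma_g^2$ together with $\alpha\le1/L_h$. This gives
\[
\mathbb E J_{k,h}(\boldsymbol\omega^{k+1})\ge \mathbb E J_{k,h}(\boldsymbol\omega^k)+\tfrac\alpha2\mathbb E\|\nabla J_{k,h}(\boldsymbol\omega^k)\|^2-\tfrac{L_h\alpha^2}{2}\sigma_g^2.
\]

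The obstacle is that the objective changes between cycles. To handle it, replace $J_{k,h}(\boldsymbol\omega^{k+1})$ by $J_{k+1,h}(\boldsymbol\omega^{k+1})$, which costs at most $\sup_{\boldsymbol\omega}|J_{k+1,h}-J_{k,h}|$.

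Then sum over $k<T$ and telescope. The final value is at most $\sup_{k,\boldsymbol\omega}J_{k,h}(\boldsymbol\omega)$, so the telescoped difference is at most $\Delta$. The accumulated drift terms are bounded by $V_T$; the last cycle is handled by bounding $J_{T-1,h}$ directly by the supremum.

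Finally, multiply by $2/(\alpha T)$ to obtain the displayed bound \eqref{eq:zeroth-order-stationarity}. The one care point is that conditional expectations are taken with respect to the filtration generated by the checkpoint at each cycle, so that the drift term is deterministic given $\sup_{\boldsymbol\omega}$.
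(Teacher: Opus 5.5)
Your proposal is correct and follows the paper's proof step for step. It uses symmetry plus Stein's identity for unbiasedness, midpoint centering for $L_h\le R/(2h^2)$, and the pathwise bound $\|\widehat{\mathbf g}_k\|^2\le R^2\|\boldsymbol\zeta\|^2/(4h^2)$ for $\sigma_g^2$. It then telescopes the smoothness ascent inequality (with $\alpha L_h\le1$) across cycles and charges the drift to $V_T$. Your explicit Hessian bound via $\mathbb E|(v^\top\boldsymbol\zeta)^2-1|=4e^{-1/2}/\sqrt{2\pi}<1$ and your separate treatment of the last cycle, which explains why $V_T$ stops at $k=T-2$, only make explicit what the paper leaves implicit. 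One small caveat, shared with the paper: the checkpoints, and hence the $J_{k,h}$, are random, so taking $\sup_{\boldsymbol\omega}$ does not make the drift deterministic. Strictly, $V_T$ and $\Delta$ should be read as expectations or as almost-sure deterministic bounds.
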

The result bounds the stationarity error of policy learning in terms of optimization progress, evolving simulator states, and rollout noise; Appendix~\ref{app:proof-zeroth-order} gives the proof. In practice, the realized $W^{G,k}$ guides the next policy update, and learning stops when policy, allocation, and $W^G$ remain stable for $M_{\mathrm{stab}}$ consecutive cycles.
\begin{algorithm}[t]
\caption{Government--OPC Dynamic Interaction}
\label{alg:government-opc}
\begin{algorithmic}[1]
\Require $\mathbf p^0$, OPC profiles/states, $\mathcal L^0$, $\{\pi_{\omega_r}\}_r$, $\alpha$, $h$, $T_G$, $H$, $M_{\mathrm{stab}}$
\Ensure Final policy, allocation, and $W^G$
\State Xavier-initialize hidden layers and zero-initialize actor output layers
\For{$k=0,\ldots,T_G-1$}
    \State Save checkpoint; draw $\boldsymbol\zeta^k\sim\mathcal N(\mathbf0,I_D)$
    \State If $k>0$, run matched $\pm h\boldsymbol\zeta^k$ rollouts from checkpoint; update actors by Eq.~\eqref{eq:zeroth-order-policy-gradient}
    \State Restore checkpoint; generate and publish feasible policy $\mathbf p^k$
    \For{$t=kH,\ldots,(k+1)H-1$}
        \State Apply verified skills; update cognition by Eq.~\eqref{eq:cognitive-transition}; compute payoffs
        \State Sample $a_i^t\sim\rho_i^t(\cdot\mid s^t,\mathbf p^k)$ and update states
    \EndFor
    \State Find $\mathcal T_k^\star$; evaluate $W^{G,k}$; construct $\tilde{\mathbf o}^{k+1}$ and evolve $\mathcal L^{k+1}$
    \State If policy, allocation, and $W^G$ stabilize for $M_{\mathrm{stab}}$ cycles, \textbf{break}
\EndFor
\State \Return $\mathbf p^k$, $\mathbf x^{(k+1)H}$, $W^{G,k}$
\end{algorithmic}
\end{algorithm}

\section{Bidirectional Mediation with Evolving Skills}
Our preliminary study (Section~\ref{sec:preliminary-study}) found that regional suitability and heterogeneous policy perceptions jointly shape OPC entry, leading us to design a mediator that addresses cognitive barriers through context-adaptive skills. This section details how these skills are structured under evidence constraints, deployed in bilateral interactions, and refined through outcome-based evaluation.

\subsection{Skill Structure}
A skill maps an OPC profile, concern, and verified evidence to an explanation, feasible recommendation, and structured government feedback, while recording applicability and outcomes. We manually initialize $\mathcal L^0$ with 14 skills distilled from OPC community experience; new candidates remain temporary until repeated evaluation supports retention. Appendix~\ref{app:prompts} gives the schema.

Each skill combines evidence constraints with a mediation strategy. Following the Context--Mechanism--Outcome framework~\cite{dalkin2015mechanism}, it conditions messages on capacity and industrial fit and tracks cognition, allocation, and $W^G$. Let $\sigma_{ri}^t$ be the message to OPC $i$ and $\mathcal S_{ri}^k$ its cited regional evidence. It must satisfy
\begin{equation}
\operatorname{Verify}(\sigma_{ri}^t,\mathcal S_{ri}^k)=1,
\qquad \mathcal S_{ri}^k\subseteq\mathcal S_r^k.
\end{equation}
Thus, $\operatorname{Verify}=1$ only when every factual claim is traceable to the current verified evidence set. Unsupported claims or omitted material conditions trigger revision, and capacity and local evidence are rechecked whenever a skill is invoked or reused.

While evidence constraints ensure factual reliability, they cannot address differences in OPC information processing. We therefore adopt the Elaboration Likelihood Model~\cite{petty1986elm} to guide central-route reasoning, peripheral credibility cues, and adaptive route switching. Accordingly, each skill comprises applicability conditions, an evidence boundary, a mediation route, feasible recommendation branches, and an outcome record.

\afterpage{
\begin{table*}[!t]
\centering
\caption{$W^G$ and OPC utility across scenarios and OPC-simulator backbones.}
\label{tab:joint_welfare_comparison}
\setlength{\tabcolsep}{3.2pt}\renewcommand{\arraystretch}{1.00}\small
\resizebox{\textwidth}{!}{\begin{tabular}{c|l|cccccccccc|cc}\hline
\multirow{2}{*}{\textbf{\shortstack{OPC\\Backbone}}}&\multirow{2}{*}{\textbf{Baseline}}&\multicolumn{2}{c}{\textbf{S1}}&\multicolumn{2}{c}{\textbf{S2}}&\multicolumn{2}{c}{\textbf{S3}}&\multicolumn{2}{c}{\textbf{S4}}&\multicolumn{2}{c}{\textbf{S5}}&\multicolumn{2}{c}{\textbf{Average}}\\
\cline{3-14}& & $\mathbf{W}_G$ & $\mathbf{W}_{\mathrm{OPC}}$ & $\mathbf{W}_G$ & $\mathbf{W}_{\mathrm{OPC}}$ & $\mathbf{W}_G$ & $\mathbf{W}_{\mathrm{OPC}}$ & $\mathbf{W}_G$ & $\mathbf{W}_{\mathrm{OPC}}$ & $\mathbf{W}_G$ & $\mathbf{W}_{\mathrm{OPC}}$ & $\mathbf{W}_G$ & $\mathbf{W}_{\mathrm{OPC}}$\\\hline
\multirow{9}{*}{\textbf{\shortstack{Qwen3.5-9B\\(Mar. 2026)}}}& No Policy&1.08&0.320&0.00&0.000&1.08&0.320&3.81&0.449&0.82&0.311&1.36&0.280\\
& Static Welfare&9.07&0.604&8.16&0.509&6.23&0.428&9.11&0.696&9.62&0.646&8.44&0.577\\
& Contextual Bandit&8.96&0.590&7.62&0.448&5.70&0.404&8.51&0.653&9.41&0.622&8.04&0.543\\
& Carbon-PPO&9.97&0.669&9.81&0.557&6.57&0.451&9.92&0.739&10.60&0.710&9.37&0.625\\
& Nash Welfare&8.25&0.582&7.16&0.518&5.88&0.435&8.67&0.682&8.32&0.647&7.66&0.573\\
& Belief-Robust&8.76&0.583&6.61&0.526&4.78&0.379&7.67&0.569&7.50&0.593&7.06&0.530\\
& Bi-level Consensus&8.50&0.544&6.55&0.417&5.54&0.391&8.63&0.641&8.50&0.576&7.54&0.514\\
& CONSENSAGENT&6.61&0.449&3.41&0.181&4.28&0.388&7.80&0.588&7.19&0.508&5.86&0.423\\
& \textbf{PolicyMediator}&\textbf{10.58}&\textbf{0.758}&\textbf{10.42}&\textbf{0.670}&\textbf{7.11}&\textbf{0.564}&\textbf{10.56}&\textbf{0.820}&\textbf{11.17}&\textbf{0.787}&\textbf{9.97}&\textbf{0.720}\\\hline
\multirow{9}{*}{\textbf{\shortstack{Ministral 3-8B\\(Dec. 2025)}}}& No Policy&1.08&0.320&0.00&0.000&1.08&0.320&3.81&0.449&0.82&0.311&1.36&0.280\\
& Static Welfare&9.06&0.602&8.11&0.509&6.18&0.421&9.10&0.694&9.61&0.643&8.41&0.574\\
& Contextual Bandit&8.96&0.588&7.61&0.445&5.64&0.402&8.50&0.651&9.39&0.619&8.02&0.541\\
& Carbon-PPO&9.96&0.667&9.79&0.554&6.56&0.449&9.90&0.737&10.58&0.706&9.36&0.623\\
& Nash Welfare&8.24&0.581&7.11&0.520&5.88&0.434&8.66&0.680&8.24&0.643&7.63&0.572\\
& Belief-Robust&8.75&0.581&6.85&0.420&4.70&0.376&7.66&0.568&8.24&0.623&7.24&0.514\\
& Bi-level Consensus&8.49&0.543&6.55&0.416&5.54&0.390&8.62&0.640&8.50&0.574&7.54&0.513\\
& CONSENSAGENT&7.83&0.517&5.66&0.379&5.23&0.382&8.43&0.626&8.00&0.558&7.03&0.492\\
& \textbf{PolicyMediator}&\textbf{10.54}&\textbf{0.760}&\textbf{10.37}&\textbf{0.670}&\textbf{7.10}&\textbf{0.563}&\textbf{10.55}&\textbf{0.817}&\textbf{11.15}&\textbf{0.784}&\textbf{9.94}&\textbf{0.719}\\\hline
\multirow{9}{*}{\textbf{\shortstack{Gemma 4-12B\\(Jun. 2026)}}}& No Policy&1.08&0.320&0.00&0.000&1.08&0.320&3.81&0.449&0.82&0.311&1.36&0.280\\
& Static Welfare&9.07&0.603&8.16&0.509&6.19&0.422&9.10&0.694&9.62&0.646&8.43&0.575\\
& Contextual Bandit&8.96&0.589&7.62&0.448&5.64&0.403&8.50&0.651&9.41&0.622&8.03&0.543\\
& Carbon-PPO&9.96&0.669&9.81&0.557&6.56&0.450&9.90&0.737&10.56&0.706&9.36&0.624\\
& Nash Welfare&8.25&0.582&7.16&0.518&5.88&0.435&8.66&0.680&8.32&0.648&7.65&0.573\\
& Belief-Robust&8.76&0.583&6.85&0.422&4.97&0.411&7.66&0.568&7.96&0.616&7.24&0.520\\
& Bi-level Consensus&8.50&0.544&6.55&0.417&5.54&0.390&8.63&0.641&8.50&0.575&7.54&0.513\\
& CONSENSAGENT&6.91&0.467&5.10&0.345&4.57&0.390&7.97&0.603&6.74&0.482&6.26&0.457\\
& \textbf{PolicyMediator}&\textbf{10.54}&\textbf{0.759}&\textbf{10.42}&\textbf{0.670}&\textbf{7.11}&\textbf{0.564}&\textbf{10.55}&\textbf{0.818}&\textbf{11.18}&\textbf{0.787}&\textbf{9.96}&\textbf{0.720}\\\hline
\end{tabular}}
\vspace{8pt}
\end{table*}
\begin{table}[!t]
\centering
\caption{Ablation results for $W^G$ (mean $\pm$ 95\% CI).}
\label{tab:ablation}
\renewcommand{\arraystretch}{1.15}
\small
\setlength{\tabcolsep}{2.5pt}
\begin{tabular*}{\columnwidth}{@{\extracolsep{\fill}}lccccc@{}}
\toprule
\textbf{Method} & \textbf{S1} & \textbf{S2} & \textbf{S3} & \textbf{S4} & \textbf{S5} \\
\midrule
\textbf{Full} & $10.58_{\scriptscriptstyle \pm 0.12}$ & $10.42_{\scriptscriptstyle \pm 0.13}$ & $7.11_{\scriptscriptstyle \pm 0.12}$ & $10.56_{\scriptscriptstyle \pm 0.11}$ & $11.17_{\scriptscriptstyle \pm 0.13}$ \\
\textbf{w/o Mediator} & $8.65_{\scriptscriptstyle \pm 0.11}$ & $8.29_{\scriptscriptstyle \pm 0.10}$ & $5.34_{\scriptscriptstyle \pm 0.07}$ & $8.23_{\scriptscriptstyle \pm 0.18}$ & $8.37_{\scriptscriptstyle \pm 0.12}$ \\
\textbf{w/o Cog. Update} & $8.88_{\scriptscriptstyle \pm 0.14}$ & $8.33_{\scriptscriptstyle \pm 0.09}$ & $5.49_{\scriptscriptstyle \pm 0.09}$ & $9.07_{\scriptscriptstyle \pm 0.15}$ & $8.75_{\scriptscriptstyle \pm 0.08}$ \\
\textbf{w/o Gov. Feedback} & $8.22_{\scriptscriptstyle \pm 0.14}$ & $7.52_{\scriptscriptstyle \pm 0.12}$ & $5.69_{\scriptscriptstyle \pm 0.13}$ & $8.56_{\scriptscriptstyle \pm 0.12}$ & $8.30_{\scriptscriptstyle \pm 0.10}$ \\
\textbf{w/o Skill Evol.} & $9.37_{\scriptscriptstyle \pm 0.14}$ & $9.03_{\scriptscriptstyle \pm 0.13}$ & $5.99_{\scriptscriptstyle \pm 0.12}$ & $9.70_{\scriptscriptstyle \pm 0.13}$ & $9.53_{\scriptscriptstyle \pm 0.10}$ \\
\textbf{w/o Theory Guide} & $9.73_{\scriptscriptstyle \pm 0.13}$ & $9.34_{\scriptscriptstyle \pm 0.13}$ & $6.25_{\scriptscriptstyle \pm 0.12}$ & $9.88_{\scriptscriptstyle \pm 0.11}$ & $9.91_{\scriptscriptstyle \pm 0.11}$ \\
\bottomrule
\end{tabular*}
\end{table}
}

\subsection{Mediation-Embedded Game Pipeline}
\textbf{OPC-Facing Dialogue.} Because cognition-constrained entry requires credible evidence rather than generic persuasion, each mediator retrieves a skill using the policy context, implementation outcomes, OPC profile, cognitive state, and regional evidence after $\mathbf p^k$ is published. At step $t$, it composes
\begin{equation}
\sigma_{ri}^t
=\operatorname{Compose}
\left(
\ell_{ri}^t,
\mathcal S_{ri}^k,
o_r^k,
\nu_{ri}^t
\right),
\qquad
\ell_{ri}^t\in
\mathcal L^k\cup\widetilde{\mathcal L}^k.
\end{equation}
where $\operatorname{Compose}$ is the evidence-constrained message constructor and $\nu_{ri}^t\in\mathcal R\cup\{0\}$ denotes the recommended region, an alternative, or non-entry. Only verified messages enter $\mathcal D_i^t$ and update cognition through Eq.~\eqref{eq:cognitive-transition}; the OPC then samples its own action using Eq.~\eqref{eq:logit-response}.

\textbf{Government-Facing Feedback.} Because unresolved concerns can indicate policy mismatch rather than misunderstanding, the mediator summarizes unmet needs, eligibility difficulties, delivery concerns, and observed responses after cycle $k$. A government-facing skill encodes this evidence as
\begin{equation}
\Delta\boldsymbol\theta_{r,\mathrm{LLM}}^{k+1}
=\operatorname{clip}\!\left(
\operatorname{Encoder}_{\mathrm{LLM}}\!\left(
o_r^k,
\mathcal D_r^k,
\Delta x^k,
W^{G,k}
\right),
\pm\delta_{\mathrm{LLM}}^{\max}
\right),
\end{equation}
where $\operatorname{Encoder}_{\mathrm{LLM}}$ is the government-feedback encoder, $0<\delta_{\mathrm{LLM}}^{\max}\leq1$, $\mathcal D_r^k$ collects region-$r$ dialogues, and allocation change is the switching rate
$
\Delta x^k=N^{-1}\sum_i\mathbf1(x_i^{(k+1)H}\neq x_i^{kH})\in[0,1].
$
The suggestion augments the next government observation
$
\widetilde o_r^{k+1}
=\left(
o_r^{k+1},
\Delta\boldsymbol\theta_{r,\mathrm{LLM}}^{k+1}
\right).
$
Regional actors use these augmented observations and share the realized government objective during policy learning. The suggestion is auxiliary: the learned government policy still produces the formal, feasibility-projected adjustment. Thus, the pipeline verifies evidence, explains conditions, updates cognition, presents feasible options, and returns unresolved concerns for the next policy cycle.

\subsection{Self-Evolving Skill Lifecycle}
The skill library follows four stages: candidate construction; deployment and trace collection; outcome-based evaluation and revision; and promotion with redundancy pruning. For traces $\mathfrak B_\ell^k$ involving skill $\ell$ through cycle $k$, we evaluate
\begin{equation}
\widehat J_\ell^k
=\operatorname{Eval}
\left(\mathfrak B_\ell^k;
\Delta\boldsymbol\Psi,\Delta x,\Delta W^G,\bar V\right)\in[0,1],
\end{equation}
where $\operatorname{Eval}$ is the normalized trace-scoring operator and $\bar V$ aggregates verification failures. Validated skills with consistently positive evidence form the promotion set $\mathcal P^k$; uncertain or context-mismatched skills are revised in the temporary pool, and ineffective or redundant skills are pruned:
\begin{equation}
\mathcal L^{k+1}
=\operatorname{Prune}_{K_{\mathcal L}}
\left(\mathcal L^k\cup\mathcal P^k\right),
\end{equation}
where $\operatorname{Prune}_{K_{\mathcal L}}$ retains at most $K_{\mathcal L}$ skills. Appendix~\ref{app:skill-evaluation} specifies the evaluation prompts, paired simulations, threshold, and update procedure. Skill evaluation measures factual validity, cognition and allocation changes, and marginal $W^G$, going beyond persuasion success or user satisfaction.
\begin{figure}[!t]
\centering
\begin{minipage}[t]{0.47\textwidth}
\vspace{0pt}
\centering
\captionof{table}{City-level alignment between the entry-friendliness benchmark and model OPC utility.}
\label{tab:preliminary_city_alignment}
\vspace{1pt}
\renewcommand{\arraystretch}{1.12}
\small
\begin{tabular*}{\linewidth}{@{\extracolsep{\fill}}lcccc@{}}
\hline
\textbf{City} & \textbf{External proxy} & $\mathbf{W}^{\mathbf{opc}}$ & \textbf{Ext. rank} & \textbf{Model rank} \\
\hline
\textbf{Shenzhen}  & 3.476 & 0.748 & 1 & 1 \\
\textbf{Beijing}   & 3.333 & 0.680 & 2 & 2 \\
\textbf{Shanghai}  & 3.231 & 0.621 & 3 & 3 \\
\textbf{Nanjing}   & 3.150 & 0.616 & 4 & 4 \\
\textbf{Chongqing} & 3.111 & 0.153 & 5 & 6 \\
\textbf{Hefei}     & 2.571 & 0.323 & 6 & 5 \\
\hline
\end{tabular*}
\end{minipage}\hfill
\begin{minipage}[t]{0.47\textwidth}
\vspace{5pt}
\centering
\includegraphics[width=\linewidth,trim=0 0 0 4pt,clip]{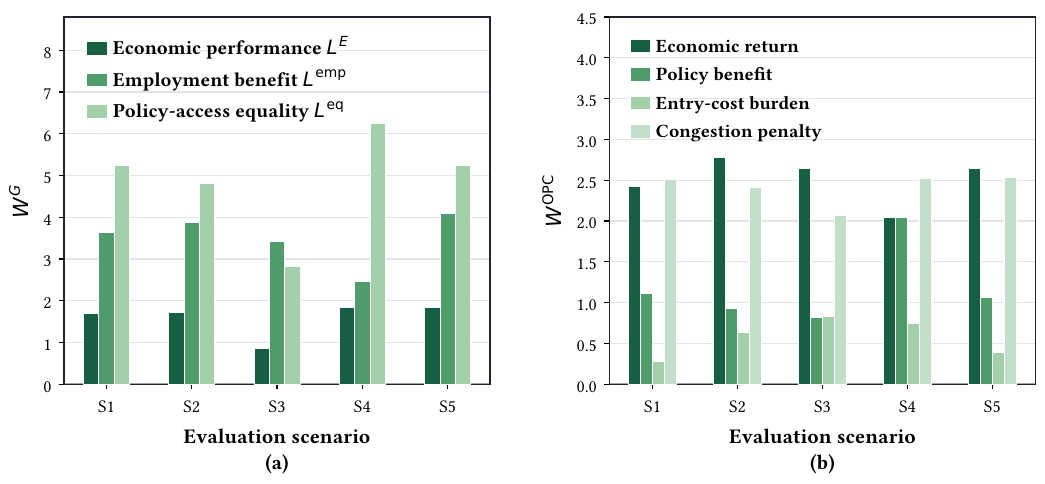}
\captionof{figure}{Government and OPC outcome decomposition: (a) components of $W^G$; (b) benefit and cost magnitudes in $W^{\mathrm{OPC}}$.}
\label{fig:welfare_components}
\end{minipage}
\end{figure}

\section{Experiments}
We first validate simulated entry patterns, then compare welfare across policies and OPC-simulator backbones, and finally examine ablations, mechanisms, a case study, and interactive judgments.

\subsection{Experimental Setup}
We evaluate PolicyMediator in an agent-based environment spanning six Chinese regions, five OPC types, six policy categories, and five scenarios: baseline conditions (S1), higher uncovered entry costs (S2), lower policy redemption (S3), greater OPC mobility (S4), and a technology-sector shock (S5). These settings, together with industrial fit, entry difficulty, capacity, and baseline returns, are derived from publicly available Chinese government policies, community tags, and socioeconomic statistics.

We vary the OPC-simulator backbone among Qwen3.5-9B, Ministral 3-8B, and Gemma 4-12B, while fixing Qwen for mediation, explanation, and consensus modules to isolate robustness to simulated OPC behavior.
We use four baseline categories which are nonadaptive controls (No Policy and Static Welfare); online learning (Contextual Bandit~\cite{li2010contextual,huang2026policysim} and Carbon-PPO~\cite{wang2024carbon}); objective-guided policy optimization (Nash Welfare~\cite{zhang2024noregret} and Belief-Robust~\cite{ciullo2023birdm}); and consensus formation (Bi-level Consensus~\cite{tu2024bilevel} and CONSENSAGENT~\cite{pitre2025consensagent}). 
Policy-learning baselines except No Policy share a frozen generic explainer  without government feedback or skill evolution, whereas consensus methods retain their native mechanisms. We report the aggregate government objective $W^G$, comprising economic performance, employment benefit, and policy-access equality, together with average realized OPC utility. Appendices A and C.1--C.2 provide data construction, implementation, and baseline details.

\subsection{Preliminary Study}
\label{sec:preliminary-study}
Before comparing complete systems, we first examine whether simulated OPC outcomes yield a plausible ranking of regional attractiveness. Holding baseline policies fixed, we record OPC entry shares and utility levels across six cities and compare the resulting city ranking with an descriptive entry-friendliness benchmark constructed from crawled public data. As shown in Table~\ref{tab:preliminary_city_alignment}, only the final adjacent pair swaps positions, yielding strong overall rank agreement (Spearman $\rho=0.943$, $p=0.00167$). This result indicates that the simulated payoffs preserve the cross-city gradient. Appendix~\ref{app:data-construction} provides the data sources and additional analyses.

\subsection{Overall Performance}
We compare methods on $W^G$ and OPC utility, then decompose $W^G$ into economic performance, employment benefit, policy-access equality, and their underlying components.
As shown in Table~\ref{tab:joint_welfare_comparison}, PolicyMediator ranks first on both metrics in all 15 settings, averaging $9.94$--$9.97$ and $0.719$--$0.720$, respectively, and exceeding the eight-baseline means by $42.4\%$ and $40.9\%$, with Carbon-PPO as the strongest baseline. Bi-level Consensus and CONSENSAGENT achieve only $7.54$ and $5.86$--$7.03$ in $W^G$, indicating that consensus alone cannot close the policy-feedback loop.

Figure~\ref{fig:welfare_components} decomposes government and OPC outcomes. Policy-access equality is largest $W^G$ component in S1, S2, S4, and S5, whereas employment benefit dominates S3. For $W^{\mathrm{OPC}}$, economic return is the largest benefit in four scenarios, while S4 exhibits nearly equal return and policy benefit; congestion penalty is the largest cost throughout. Lower redemption in S3 produces the smallest policy benefit and largest entry-cost burden. Overall, government outcomes are driven mainly by access and employment, whereas OPC utility reflects the balance between benefits and costs.

\FloatBarrier

\newcommand{\skillMaturationFigure}{%
\begin{figure}[!t]
  \centering
\includegraphics[width=\columnwidth,trim=0 0 0 0pt,clip]{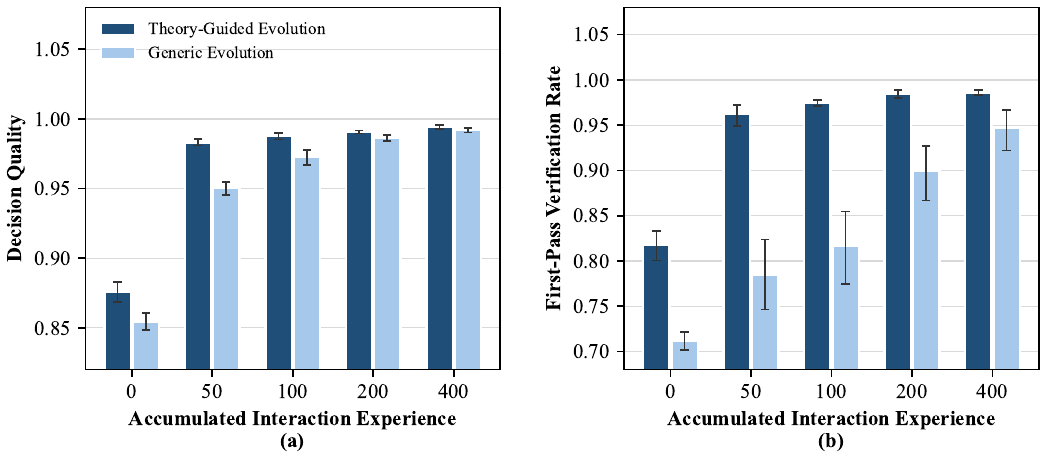}
  \caption{Skill evolution: (a) decision quality and (b) first-pass verification rate under theory-guided and generic evolution.}
  \Description{A single-column, two-panel grouped-bar figure comparing theory-guided and generic skill evolution over five experience checkpoints.}
  \label{fig:skill_maturation}
  \vspace{2pt}
\end{figure}%
}

\begin{figure*}[!t]
  \centering
\includegraphics[width=0.96\textwidth]{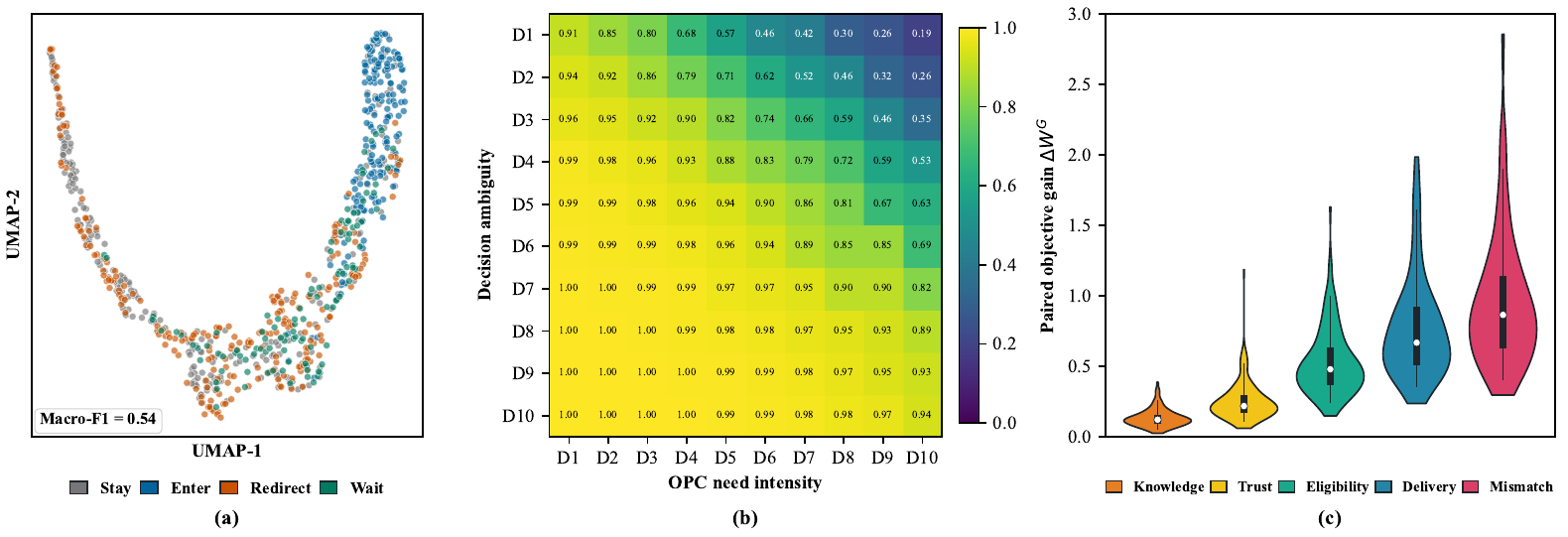}
  \caption{Mechanism analyses: (a) cognition changes by OPC transition, (b) feedback utilization across need intensity and decision ambiguity, and (c) paired $\Delta W^G$ by feedback type.}
  \Description{A cognitive-response embedding, a feedback-utilization heat map, and violin plots of paired $W^G$ gains.}
  \label{fig:mechanism_analysis}
\end{figure*}

\begin{table*}[!t]
\centering
\caption{Blinded pairwise interactive evaluation (\%; 100 pairs/comparison; $\ddagger$: exact binomial $p<.01$).}
\label{tab:interactive_evaluation}
\small
\setlength{\tabcolsep}{4pt}
\begin{tabular}{lcccccccccc|cc}
\toprule
\multirow{2}{*}{\textbf{Comparison}}
& \multicolumn{2}{c}{\textbf{Restraint}}
& \multicolumn{2}{c}{\textbf{Regional}}
& \multicolumn{2}{c}{\textbf{Targeting}}
& \multicolumn{2}{c}{\textbf{Specificity}}
& \multicolumn{2}{c}{\textbf{Progression}}
& \multicolumn{2}{|c}{\textbf{Average}} \\
\cmidrule(lr){2-3}
\cmidrule(lr){4-5}
\cmidrule(lr){6-7}
\cmidrule(lr){8-9}
\cmidrule(lr){10-11}
\cmidrule(lr){12-13}
& \textbf{Win} & \textbf{Lose}
& \textbf{Win} & \textbf{Lose}
& \textbf{Win} & \textbf{Lose}
& \textbf{Win} & \textbf{Lose}
& \textbf{Win} & \textbf{Lose}
& \textbf{Win} & \textbf{Lose} \\
\midrule
\textbf{PolicyMediator vs. CONSENSAGENT}
& 56.0 & 42.0
& 58.0 & 41.0
& $\mathbf{73.0}^{\ddagger}$ & 20.0
& 45.0 & 51.0
& $\mathbf{69.0}^{\ddagger}$ & 12.0
& $\mathbf{60.2}$ & 33.2 \\
\textbf{PolicyMediator vs. Contextual Bandit}
& $\mathbf{87.0}^{\ddagger}$ & 12.0
& $\mathbf{47.0}$ & 46.0
& 40.0 & 53.0
& $\mathbf{69.0}^{\ddagger}$ & 26.0
& $\mathbf{41.0}$ & 36.0
& $\mathbf{56.8}$ & 34.6 \\
\textbf{PolicyMediator vs. Generic Explainer}
& $\mathbf{92.0}^{\ddagger}$  & 4.0
& $\mathbf{81.0}^{\ddagger}$  & 17.0
& $\mathbf{100.0}^{\ddagger}$ & 0.0
& $\mathbf{91.0}^{\ddagger}$  & 7.0
& $\mathbf{100.0}^{\ddagger}$ & 0.0
& $\mathbf{92.8}$ & 5.6 \\
\bottomrule
\end{tabular}
\end{table*}
\skillMaturationFigure

\subsection{Further Analysis}
\textbf{Ablation Analysis.}
To identify PolicyMediator's key contributors, we conduct ablations of mediation, cognitive updating, policy feedback, skill evolution, and theory-guided selection. Table~\ref{tab:ablation} shows the full model achieves the highest five-scenario mean ($9.97$). Removing policy feedback causes the largest decline ($2.31$), highlighting its role in policy learning. Removing mediation follows ($2.19$), then cognitive updating ($1.86$), skill evolution ($1.24$), and theory guidance ($0.95$). These results identify feedback and mediation as primary contributors, with skill mechanisms providing additional gains.

\textbf{Cognitive-State and Mediation Analysis.} These tests examine whether cognition predicts OPC decisions, when mediated feedback helps, and which concerns offer policy leverage. Figure~\ref{fig:mechanism_analysis}(a) projects pre-aggregation cognition changes from 800 interactions, colored by the next OPC transition. A five-fold cross-validated classifier achieves macro-F1 $0.54$; despite partial overlap, the four distinguishable regions indicate predictive structure. For Figure~\ref{fig:mechanism_analysis}(b), need intensity denotes demand-shock magnitude, decision ambiguity denotes noise in the aggregate government signal, and feedback utilization is the recovered objective-gap fraction, $(W_{\mathrm{med}}-W_{\mathrm{base}})/(W_{\mathrm{best}}-W_{\mathrm{base}})$. Mean utilization is $0.842$; under low ambiguity it falls from $0.91$ to $0.19$ as need rises, but remains $0.94$--$1.00$ under the highest ambiguity. Figure~\ref{fig:mechanism_analysis}(c) compares 800 paired episodes: median $\Delta W^G$ ranges from regional mismatch ($0.865$) and delivery ($0.668$) to eligibility ($0.479$), trust ($0.218$), and knowledge ($0.121$). Thus, cognition informs OPC decisions, while government feedback is most valuable under ambiguous signals and actionable concerns.

\textbf{Skill Evolution Analysis.} This simulation compares theory-guided and generic evolution from 14 manually initialized skills with matched policies, contexts, and budgets. On held-out contexts, decision quality is one minus the mean utility-normalized regret of the recommended region, while first-pass verification is the fraction of recommendations passing the simulated check without revision. Figure~\ref{fig:skill_maturation} reports 20-seed means and 95\% confidence intervals over 0--400 experiences. At 400 experiences, decision quality is nearly identical, whereas verification remains higher. Thus, theory guidance enables faster skill maturation, stronger verification reliability, and near-optimal final decision quality.

\textbf{Case Study Analysis.} Table~\ref{tab:policymediator_case} in Appendix shows how bilateral mediation links an OPC concern to an immediate decision and later policy feedback. Despite a 34\% subsidy, 58\% computing support, and 12 rent-free months, the OPC questions whether computing support is a deliverable resource rather than a paper promise. PolicyMediator supplies redemption records and flags the unverified no-penalty clause. Trust falls from $.42$ to $.32$ and returns to $.42$ after verification, while awareness rises from $.45$ to $.56$. The OPC conditionally accepts at 75\% confidence, applies for computing and space support, and defers the loan until its legal clause is verified.

\subsection{Interactive Evaluation}

The interactive evaluation assesses mediation-response quality across decision-relevant dimensions. Ten trained evaluators---two OPC practitioners, four human--computer interaction or policy-modeling experts, and four graduate students---made blinded pairwise judgments on randomized outputs; Appendix~C.1 details the protocol and training. Restraint measures avoidance of pressure or unsupported promises; regional sensitivity measures use of region-specific evidence; targeting measures response to the stated concern. Specificity requires concrete, verifiable conditions or next steps, while progression requires non-repetitive resolution across turns. Table~\ref{tab:interactive_evaluation} reports wins and losses, omitting ties. PolicyMediator wins 13 of 15 criterion-level comparisons, including all five against the Generic Explainer. It trails only CONSENSAGENT in specificity and Contextual Bandit in targeting, indicating that evaluators broadly prefer its responses across the interactive criteria.

\section{Conclusion, Limitations and Future Work}
We presented PolicyMediator for cross-regional policy learning when incomplete policy information causes OPCs to reject otherwise suitable regions. It combines cognition-constrained Stackelberg learning, verified bilateral mediation, and outcome-aware skill evolution to distinguish cognitive barriers from policy mismatch and convert OPC concerns into policy feedback. This closed loop improves regional matching, economic performance, employment benefits, and policy-access equality beyond disconnected policy optimization. 
The results apply to coordinated-government settings with verifiable policies, comparable regional options, and observable entry and capacity signals. Limitations include simulated OPC responses and cooperative-government assumptions, restricting generalization to competitive jurisdictions and long-term behavior. Future work will consider noncooperative regions, extend to broader macroeconomic settings, and further investigate welfare outcomes and trade-offs.

\bibliographystyle{ACM-Reference-Format}
\bibliography{PolicyMediator_refs}

@article{alyakoob2026market,
  author  = {Mohammed Alyakoob and Mohammad Rahman},
  title   = {Market Design Choices, Racial Discrimination, and Equitable Microentrepreneurship in Digital Marketplaces},
  journal = {Management Science},
  year    = {2026},
  volume  = {72},
  number  = {3},
  pages   = {1878--1903},
}

@article{guo2025skill,
  author  = {Xue Guo and Zhi Cheng and Paul A. Pavlou},
  title   = {Skill-Biased Technical Change, Again? Online Gig Platforms and Local Employment},
  journal = {Information Systems Research},
  year    = {2025},
  volume  = {36},
  number  = {3},
  pages   = {1354--1374},
}

@article{jones2024gendered,
  author  = {Sam Jones and Kunal Sen},
  title   = {The Gendered Impact of Digital Jobs Platforms: Experimental Evidence from Mozambique},
  journal = {The World Bank Economic Review},
  year    = {2025},
  volume  = {39},
  number  = {1},
  pages   = {143--163},
  doi = {10.1093/wber/lhae019},
}

@article{doggett2024digital,
  author  = {Olivia Doggett and Sarah Gram and Sophia Jit and Basel Harb and Houssam Harb and Mohamad Shafik and Robert Soden and Matt Ratto},
  title   = {Digital Gig Work Under Unstable Energy Infrastructures: Invisible Workarounds and Alternative Imaginaries},
  journal = {Proceedings of the ACM on Human-Computer Interaction},
  year    = {2024},
  volume  = {8},
  number  = {CSCW2},
  pages   = {1--26},
}

@article{leong2024coordination,
  author  = {Carmen Leong and Silvia Lin and Felix Tan and Jie Yu},
  title   = {Coordination in a Digital Platform Organization},
  journal = {Information Systems Research},
  year    = {2024},
  volume  = {35},
  number  = {1},
  pages   = {363--393},
}

@article{agarwal2024mobile,
  author  = {Aparajita Agarwal and Valentina A. Assenova},
  title   = {Mobile Money as a Stepping Stone to Financial Inclusion: How Digital Multisided Platforms Fill Institutional Voids},
  journal = {Organization Science},
  year    = {2024},
  volume  = {35},
  number  = {3},
  pages   = {769--787},
}

@article{liu2024unintended,
  author  = {Yi Liu and Bowen Lou and Xinyi Zhao and Xinxin Li},
  title   = {Unintended Consequences of Advances in Matching Technologies: Information Revelation and Strategic Participation on Gig-Economy Platforms},
  journal = {Management Science},
  year    = {2024},
  volume  = {70},
  number  = {3},
  pages   = {1729--1754},
}

@article{farmer2009economy,
  author  = {J. Doyne Farmer and Duncan Foley},
  title   = {The Economy Needs Agent-Based Modelling},
  journal = {Nature},
  year    = {2009},
  volume  = {460},
  number  = {7256},
  pages   = {685--686},
}

@inproceedings{park2023generative,
  author    = {Joon Sung Park and Joseph O'Brien and Carrie Jun Cai and Meredith Ringel Morris and Percy Liang and Michael S. Bernstein},
  title     = {Generative Agents: Interactive Simulacra of Human Behavior},
  booktitle = {Proceedings of the 36th Annual ACM Symposium on User Interface Software and Technology},
  year      = {2023},
  pages     = {1--22},
}

@inproceedings{li2024econagent,
  author    = {Nian Li and Chen Gao and Mingyu Li and Yong Li and Qingmin Liao},
  title     = {{EconAgent}: Large Language Model-Empowered Agents for Simulating Macroeconomic Activities},
  booktitle = {Proceedings of the 62nd Annual Meeting of the Association for Computational Linguistics (Volume 1: Long Papers)},
  year      = {2024},
  pages     = {15523--15536},
}

@inproceedings{papachristou2023dynamic,
  author    = {Marios Papachristou and Siddhartha Banerjee and Jon Kleinberg},
  title     = {Dynamic Interventions for Networked Contagions},
  booktitle = {Proceedings of the ACM Web Conference 2023},
  year      = {2023},
  pages     = {3519--3529},
}

@inproceedings{yong2026intelliplanner,
  author    = {Jiarui Yong and Chaoyi Sun and Youfang Wang and Xiang Zhou},
  title     = {{Intelli-Planner}: Towards Customized Urban Planning via Large Language Model Empowered Reinforcement Learning},
  booktitle = {Proceedings of the ACM Web Conference 2026},
  year      = {2026},
  pages     = {9385},
}

@inproceedings{bai2021sample,
  author    = {Yu Bai and Chi Jin and Huan Wang and Caiming Xiong},
  title     = {Sample-Efficient Learning of Stackelberg Equilibria in General-Sum Games},
  booktitle = {Advances in Neural Information Processing Systems},
  year      = {2021},
  volume    = {34},
  pages     = {25799--25811},
}

@article{zheng2022aieconomist,
  author  = {Stephan Zheng and Alexander Trott and Sunil Srinivasa and David C. Parkes and Richard Socher},
  title   = {The {AI} Economist: Taxation Policy Design via Two-Level Deep Multiagent Reinforcement Learning},
  journal = {Science Advances},
  year    = {2022},
  volume  = {8},
  number  = {18},
  pages   = {eabk2607},
}

@inproceedings{huang2026policysim,
  author    = {Renhong Huang and Ning Tang and Jiarong Xu and Yuxuan Cao and Qingqian Tu and Sheng Guo and Bo Zheng and Huiyuan Liu and Yang Yang},
  title     = {{PolicySim}: An {LLM}-Based Agent Social Simulation Sandbox for Proactive Policy Optimization},
  booktitle = {Proceedings of the ACM Web Conference 2026},
  year      = {2026},
  pages     = {4781--4792},
}

@inproceedings{li2010contextual,
  author    = {Lihong Li and Wei Chu and John Langford and Robert E. Schapire},
  title     = {A Contextual-Bandit Approach to Personalized News Article Recommendation},
  booktitle = {Proceedings of the 19th International Conference on World Wide Web},
  year      = {2010},
  pages     = {661--670},
  doi       = {10.1145/1772690.1772758},
}

@inproceedings{wang2024carbon,
  author    = {Han Wang and Wenhao Li and Hongyuan Zha and Baoxiang Wang},
  title     = {Carbon Market Simulation with Adaptive Mechanism Design},
  booktitle = {Proceedings of the Thirty-Third International Joint Conference on Artificial Intelligence},
  year      = {2024},
  pages     = {8824--8828},
  doi       = {10.24963/ijcai.2024/1041},
}

@article{ciullo2023birdm,
  author  = {Alessio Ciullo and Alessio Domeneghetti and Jan H. Kwakkel and Karin M. De Bruijn and Frans Klijn and Attilio Castellarin},
  title   = {Belief-Informed Robust Decision Making ({BIRDM}): Assessing Changes in Decision Robustness Due to Changing Distributions of Deep Uncertainties},
  journal = {Environmental Modelling \& Software},
  year    = {2023},
  volume  = {159},
  pages   = {105560},
  doi     = {10.1016/j.envsoft.2022.105560},
}

@inproceedings{aguilera2025population,
  author    = {Alba Aguilera and Miquel Albert{\'i} and Nardine Osman and Georgina Curto},
  title     = {Population Synthesis with Motivational Attributes: A Path Towards Cultural Variation in Agent-Based Models},
  booktitle = {Proceedings of the 28th European Conference on Artificial Intelligence},
  year      = {2025},
  pages     = {3751--3758},
}

@article{zhong2023stackelbergnash,
  author  = {Han Zhong and Zhuoran Yang and Zhaoran Wang and Michael I. Jordan},
  title   = {Can Reinforcement Learning Find Stackelberg--Nash Equilibria in General-Sum Markov Games with Myopically Rational Followers?},
  journal = {Journal of Machine Learning Research},
  year    = {2023},
  volume  = {24},
  number  = {35},
  pages   = {1--52},
}

@inproceedings{mi2024taxai,
  author    = {Qirui Mi and Siyu Xia and Yan Song and Haifeng Zhang and Shenghao Zhu and Jun Wang},
  title     = {{TaxAI}: A Dynamic Economic Simulator and Benchmark for Multi-Agent Reinforcement Learning},
  booktitle = {Proceedings of the 23rd International Conference on Autonomous Agents and Multiagent Systems},
  year      = {2024},
  pages     = {1390--1399},
}

@inproceedings{zhang2024noregret,
  author    = {Mengxiao Zhang and Ramiro Deo-Campo Vuong and Haipeng Luo},
  title     = {No-Regret Learning for Fair Multi-Agent Social Welfare Optimization},
  booktitle = {Advances in Neural Information Processing Systems},
  year      = {2024},
  volume    = {37},
  pages     = {57671--57700},
}

@inproceedings{bindel2011opinion,
  author    = {David Bindel and Jon Kleinberg and Sigal Oren},
  title     = {How Bad Is Forming Your Own Opinion?},
  booktitle = {Proceedings of the 52nd Annual IEEE Symposium on Foundations of Computer Science},
  year      = {2011},
  pages     = {57--66},
}

@article{auletta2019consensus,
  author  = {Vincenzo Auletta and Angelo Fanelli and Diodato Ferraioli},
  title   = {Consensus in Opinion Formation Processes in Fully Evolving Environments},
  journal = {Proceedings of the AAAI Conference on Artificial Intelligence},
  year    = {2019},
  volume  = {33},
  number  = {1},
  pages   = {6022--6029},
}

@inproceedings{musco2018polarization,
  author    = {Cameron Musco and Christopher Musco and Charalampos E. Tsourakakis},
  title     = {Minimizing Polarization and Disagreement in Social Networks},
  booktitle = {Proceedings of the 2018 World Wide Web Conference},
  year      = {2018},
  pages     = {369--378},
}

@article{wall2001mediation,
  author  = {James A. Wall and John B. Stark and Rhetta L. Standifer},
  title   = {Mediation: A Current Review and Theory Development},
  journal = {Journal of Conflict Resolution},
  year    = {2001},
  volume  = {45},
  number  = {3},
  pages   = {370--391},
}

@incollection{susskind2018deliberative,
  author    = {Lawrence Susskind and Jessica Gordon and Yasmin Zaerpoor},
  title     = {Deliberative Democracy and Public Dispute Resolution},
  booktitle = {The Oxford Handbook of Deliberative Democracy},
  year      = {2018},
  pages     = {710--727},
}

@article{dalkin2015mechanism,
  author  = {Sonia Michelle Dalkin and Joanne Greenhalgh and Diana Jones and Bill Cunningham and Monique Lhussier},
  title   = {What's in a Mechanism? Development of a Key Concept in Realist Evaluation},
  journal = {Implementation Science},
  year    = {2015},
  volume  = {10},
  pages   = {49},
}

@article{tessler2024common,
  author  = {Michael Henry Tessler and Michiel A. Bakker and Daniel Jarrett and Hannah Sheahan and Martin J. Chadwick and Raphael Koster and Georgina Evans and Lucy Campbell-Gillingham and Tantum Collins and David C. Parkes and Matthew Botvinick and Christopher Summerfield},
  title   = {{AI} Can Help Humans Find Common Ground in Democratic Deliberation},
  journal = {Science},
  year    = {2024},
  volume  = {386},
  number  = {6719},
  pages   = {eadq2852},
}

@article{chen2026mediation,
  author  = {Junjie Chen and Haitao Li and Minghao Qin and Yujia Zhou and Yanxue Ren and Wuyue Wang and Yiqun Liu and Yueyue Wu and Qingyao Ai},
  title   = {Simulating Dispute Mediation with {LLM}-Based Agents for Legal Research},
  journal = {Proceedings of the AAAI Conference on Artificial Intelligence},
  year    = {2026},
  volume  = {40},
  number  = {35},
  pages   = {29368--29375},
}

@inproceedings{pitre2025consensagent,
  author    = {Priya Pitre and Naren Ramakrishnan and Xuan Wang},
  title     = {{CONSENSAGENT}: Towards Efficient and Effective Consensus in Multi-Agent {LLM} Interactions Through Sycophancy Mitigation},
  booktitle = {Findings of the Association for Computational Linguistics: ACL 2025},
  year      = {2025},
  pages     = {22112--22133},
}

@article{tu2024bilevel,
  author  = {Yan Tu and Jiajia Song and Yutong Xie and Xiaoyang Zhou and Benjamin Lev},
  title   = {Facilitating Large-Scale Group Decision-Making in Social Networks: A Bi-Level Consensus Model with Social Influence},
  journal = {Information Fusion},
  year    = {2024},
  volume  = {105},
  pages   = {102258},
  doi     = {10.1016/j.inffus.2024.102258},
}

@article{sutton1999options,
  author  = {Richard S. Sutton and Doina Precup and Satinder Singh},
  title   = {Between {MDPs} and Semi-{MDPs}: A Framework for Temporal Abstraction in Reinforcement Learning},
  journal = {Artificial Intelligence},
  year    = {1999},
  volume  = {112},
  number  = {1--2},
  pages   = {181--211},
}

@inproceedings{eysenbach2019diayn,
  author    = {Benjamin Eysenbach and Abhishek Gupta and Julian Ibarz and Sergey Levine},
  title     = {Diversity Is All You Need: Learning Skills Without a Reward Function},
  booktitle = {International Conference on Learning Representations},
  year      = {2019},
}

@article{wang2024continual,
  author  = {Liyuan Wang and Xingxing Zhang and Hang Su and Jun Zhu},
  title   = {A Comprehensive Survey of Continual Learning: Theory, Method and Application},
  journal = {IEEE Transactions on Pattern Analysis and Machine Intelligence},
  year    = {2024},
  volume  = {46},
  number  = {8},
  pages   = {5362--5383},
}

@inproceedings{shinn2023reflexion,
  author    = {Noah Shinn and Federico Cassano and Ashwin Gopinath and Karthik Narasimhan and Shunyu Yao},
  title     = {Reflexion: Language Agents with Verbal Reinforcement Learning},
  booktitle = {Advances in Neural Information Processing Systems},
  year      = {2023},
  volume    = {36},
  pages     = {8634--8652},
}

@inproceedings{madaan2023selfrefine,
  author    = {Aman Madaan and Niket Tandon and Prakhar Gupta and Skyler Hallinan and Luyu Gao and Sarah Wiegreffe and Uri Alon and Nouha Dziri and Shrimai Prabhumoye and Yiming Yang and Shashank Gupta and Bodhisattwa Prasad Majumder and Katherine Hermann and Sean Welleck and Amir Yazdanbakhsh and Peter Clark},
  title     = {Self-Refine: Iterative Refinement with Self-Feedback},
  booktitle = {Advances in Neural Information Processing Systems},
  year      = {2023},
  volume    = {36},
  pages     = {46534--46594},
}

@inproceedings{zhang2024agentcf,
  author    = {Junjie Zhang and Yupeng Hou and Ruobing Xie and Wenqi Sun and Julian McAuley and Wayne Xin Zhao and Leyu Lin and Ji-Rong Wen},
  title     = {{AgentCF}: Collaborative Learning with Autonomous Language Agents for Recommender Systems},
  booktitle = {Proceedings of the ACM Web Conference 2024},
  year      = {2024},
  pages     = {3679},
}

@inproceedings{yu2026polyskill,
  author    = {Simon Yu and Gang Li and Weiyan Shi and Peng Qi},
  title     = {{PolySkill}: Learning Generalizable Skills Through Polymorphic Abstraction for Continual Learning},
  booktitle = {International Conference on Learning Representations},
  year      = {2026},
}

@inproceedings{yang2026evods,
  author    = {Zherui Yang and Fan Liu and Yansong Ning and Hao Liu},
  title     = {{EvoDS}: Self-Evolving Autonomous Data Science Agent with Skill Learning and Context Management},
  booktitle = {Proceedings of the 32nd ACM SIGKDD Conference on Knowledge Discovery and Data Mining},
  year      = {2026},
  pages     = {6128--6139},
}

@inproceedings{wang2026skilllibrary,
  author    = {Jiongxiao Wang and Qiaojing Yan and Yawei Wang and Yijun Tian and Soumya Smruti Mishra and Zhichao Xu and Megha Gandhi and Panpan Xu and Lin Lee Cheong},
  title     = {Reinforcement Learning for Self-Improving Agent with Skill Library},
  booktitle = {Proceedings of the 64th Annual Meeting of the Association for Computational Linguistics (Volume 1: Long Papers)},
  year      = {2026},
  pages     = {1529--1550},
}

@inproceedings{jiao2026agentic,
  author    = {Zhengbo Jiao and Shaobo Wang and Zifan Zhang and Xuan Ren and Wei Wang and Bing Zhao and Wei Hu and Linfeng Zhang},
  title     = {Agentic Proposing: Enhancing Large Language Model Reasoning via Compositional Skill Synthesis},
  booktitle = 	 {Proceedings of the 43rd International Conference on Machine Learning},
  pages = 	 {53778--53799},
  year = 	 {2026},
  volume = 	 {306},
}

@article{gefen2003trust,
  author  = {David Gefen and Elena Karahanna and Detmar W. Straub},
  title   = {Trust and {TAM} in Online Shopping: An Integrated Model},
  journal = {MIS Quarterly},
  year    = {2003},
  volume  = {27},
  number  = {1},
  pages   = {51--90},
  doi     = {10.2307/30036519},
}

@article{rosenthal1973congestion,
  author  = {Robert W. Rosenthal},
  title   = {A Class of Games Possessing Pure-Strategy Nash Equilibria},
  journal = {International Journal of Game Theory},
  year    = {1973},
  volume  = {2},
  number  = {1},
  pages   = {65--67},
  doi     = {10.1007/BF01737559},
}

@article{petty1986elm,
  author  = {Richard E. Petty and John T. Cacioppo},
  title   = {The Elaboration Likelihood Model of Persuasion},
  journal = {Advances in Experimental Social Psychology},
  year    = {1986},
  volume  = {19},
  pages   = {123--205},
  doi     = {10.1016/S0065-2601(08)60214-2},
}

@misc{unitednations2015agenda,
  author       = {{United Nations}},
  title        = {Transforming Our World: The 2030 Agenda for Sustainable Development},
  year         = {2015},
  howpublished = {United Nations General Assembly Resolution 70/1},
  url          = {https://sdgs.un.org/2030agenda},
}

@misc{cgss2023,
  author       = {{CGSS Project Team}},
  title        = {Chinese General Social Survey ({CGSS2023})},
  year         = {2026},
  howpublished = {Renmin University of China Social Science Data Platform},
  url          = {https://hdl.handle.net/20.500.14796/LWP5X7},
  note         = {Dataset},
}

@article{ghadimi2013zerothorder,
  author  = {Saeed Ghadimi and Guanghui Lan},
  title   = {Stochastic First- and Zeroth-Order Methods for Nonconvex Stochastic Programming},
  journal = {SIAM Journal on Optimization},
  year    = {2013},
  volume  = {23},
  number  = {4},
  pages   = {2341--2368},
  doi     = {10.1137/120880811},
}

@article{nesterov2017gradientfree,
  author  = {Yurii Nesterov and Vladimir Spokoiny},
  title   = {Random Gradient-Free Minimization of Convex Functions},
  journal = {Foundations of Computational Mathematics},
  year    = {2017},
  volume  = {17},
  number  = {2},
  pages   = {527--566},
  doi     = {10.1007/s10208-015-9296-2},
}

@article{humphries2020information,
  title={Information frictions and access to the Paycheck Protection Program},
  author={Humphries, John Eric and Neilson, Christopher A and Ulyssea, Gabriel},
  journal={Journal of public economics},
  volume={190},
  pages={104244},
  year={2020},
  publisher={Elsevier}
}

@inproceedings{kim2026generative,
  title={How Generative AI Enables Solo Entrepreneurship While Preserving Team Advantages},
  author={Kim, Hyunso and Kang, Hyo and Song, Jaeyong},
  booktitle={Academy of Management Proceedings},
  volume={2026},
  number={1},
  pages={16114},
  year={2026},
  
}

\appendix
\section*{Appendix}
\section{Data Construction}
\label{app:data-construction}
\subsection{Sources and use}

\textbf{Observed inputs.} Guided by the multi-regional government report, we coded six policy dimensions from 134 policies and 617 communities; a reference library contains 106 policies and 96 communities. Records were crawled and cleaned from OPCquan\footnote{\url{https://www.opcquan.com/}} and other public OPC portals\footnote{\url{https://github.com/siuserxiaowei/opc-policy}} . Policy fields provide evidence, while community tags indicate industry, infrastructure, and capacity. CGSS2023 supplies population, employment, wage, and cognition inputs~\cite{cgss2023}. The preliminary entry-friendliness proxy is derived from crawled public OPC community records, combining normalized city-level community availability, disclosed entry accessibility, and service coverage. It is used as a descriptive rank-order reference for Table~\ref{tab:preliminary_city_alignment}. Our code is built based on LLMEconomist\footnote{\url{https://github.com/sethkarten/LLMEconomist}} .

\textbf{Derived and controlled variables.} For each region and category, observed policy intensity $\widehat\theta_{rj}^{\mathrm{obs}}$ combines matching-record coverage and keyword evidence; infrastructure aggregates operator coverage and disclosed workstations, and entry difficulty $d_r$ decreases with this infrastructure proxy. Regional capacity $K_r$ preserves each region's share of disclosed workstations when mapped to the synthetic population. Community industry tags determine the OPC-type--industry mapping $\Phi_{lg}$; combined with national microenterprise employment $\widehat s_g$ and regional wages $\bar w_{rg}$, it gives the resident-income proxy $\widehat R_{lr}=\sum_g\Phi_{lg}\widehat s_g\bar w_{rg}$. 
The final compatibility matrix $M=(m_{lr})$, policy-sensitivity matrix
$\Gamma$, baseline returns $v_r$, type proportions, initial cognition, and
expenditure reference are fixed before evaluation and shared by all methods;
they are not re-estimated for individual baselines. Resident outcomes are
represented by aggregate employment and wage proxies.
We define the fiscal-cost proxy as
\(\widetilde C_r^{\mathrm{dir},k}
=\frac{\lambda_r\langle n_r^t\rangle_k}{6N}
\sum_{j=1}^{6}\!\left[\rho_r^k\theta_{rj}^k
+q(\rho_r^k\theta_{rj}^k)^2\right]\), where $\lambda_r$ is the region-specific expenditure scale and quadratic cost coefficient \(q=1\).

\subsection{Scales and normalization}
To ensure comparability across modeled variables, policy coverage and
infrastructure are min--max scaled across the six regions, with entry
difficulty defined as the reversed infrastructure score. Monetary and
employment quantities are divided by their dataset-level reference values,
while identifiers and counts remain unscaled.

\section{Prompts}
\label{app:prompts}
\subsection{Prompt Architecture}
The panels below summarize the prompt templates used by PolicyMediator, with
braces indicating fields instantiated at run time.

\promptpanel{OPC cognition and profile prompt}{
Regions and verified policies: \{region-wise records\}; communities: \{community records\}; industry directions: \{industry evidence\}; objective/scenario: \{configuration\}.\\
For each of five OPC types, estimate awareness $z_i\in[0,1]$, expected delivery (trust) $b_i\in[0,1]$, and region-specific relevance $(q_{ir})_{r\in\mathcal R}\in[0,1]^{|\mathcal R|}$. Return strict JSON only; do not add unsupported facts.}

\promptpanel{Mediator prompt}{
SYSTEM: You are a regional OPC policy-service mediator. Use only verified regional facts; do not invent amounts, eligibility, delivery, or promises. Use plain business language and at most 120 English words.\\
USER: Evidence=\{verified facts\}; fact-check dimensions=\{CMO checks\}; policy intensity=\{$\theta_1,\ldots,\theta_6$\}; OPC type=\{type\}; state=\{$\Psi_{ir}^{\mathrm{opc}}=(z_i,b_i,q_{ir})$\}; skill=\{skill specification\}; previous reply=\{reply\}; turn intent=\{intent\}. Explain conditions and offer feasible options.}

\promptpanel{OPC response prompt}{
SYSTEM: You are the founder of a \{type\} OPC choosing among regions. Be pragmatic and risk-aware, prioritize your own business, do not agree automatically, and raise one specific unresolved concern. Acknowledge verifiable information. Respond in first person in 1--2 sentences.\\
USER: Mediator message=\{message\}; business profile=\{profile\}; regional situation=\{state\}. On the final turn append exactly: STANCE=accept|conditional|decline and CONFIDENCE=0--100.}

\promptpanel{Skill specification prompt}{
Persuasion technique: \{technique\}\\
Dialog protocol: \{protocol\}\\
Must not: \{behavioral and evidence constraints\}\\
Apply only when \{OPC type, cognitive state, and concern conditions\} hold; otherwise select or revise another skill.}

\promptpanel{OPC cognition-change analysis}{
SYSTEM: Infer cognitive changes only from the transcript and verified evidence. Return raw update signals.\\
USER: Transcript=\{dialogue\}; prior state=\{$\boldsymbol\Psi_i=(z_i,b_i,(q_{ir})_{r\in\mathcal R})$\}. Return strict JSON with raw\_z, raw\_b, one raw\_q per region, and a brief rationale. Each raw signal is real-valued, with its sign indicating the direction of change. The system applies $\boldsymbol{\delta}_{\Psi}\odot\tanh(\cdot)$, where $\boldsymbol{\delta}_{\Psi}=(.15,.12,(.10)_{r\in\mathcal R})$, and projects the updated state to $[0,1]$.}

\subsection{Skill Library and Theoretical Routes}
The skill library operationalizes two complementary theories. Context--Mechanism--Outcome (CMO) analysis asks which mechanism works for which OPC and region: context records policy, capacity, fit, firm type, and cognition; mechanisms record clarification, disclosure, trust repair, or redirection; outcomes cover verification, cognition, choice, allocation, and $W^G$. Eight fact-check skills examine source, capacity, fit, amount, eligibility, timeline, risk, and norm evidence.
The Elaboration Likelihood Model (ELM) determines how verified information is communicated. Central-route skills provide detailed eligibility, benefit, procedure, fit, and capacity-risk reasoning when the OPC has sufficient relevance and processing readiness. Peripheral-route skills use concise source cues, trust repair, policy-access explanations, verifiable cases, and low-cost next steps when awareness, attention, or trust is limited. Route-switching skills move from inquiry to diagnosis, deepen or summarize explanations, and convert cases into explicit comparisons as the concern changes. Both routes obey the same evidence boundary; third-party audit or certification is mentioned only when supported by regional evidence.

Each skill record stores its target OPC types and cognitive states, applicability conditions, admissible evidence, technique, dialogue protocol, prohibited behavior, and evaluation history. It follows four stages: candidate construction, temporary deployment and trace collection, outcome-based evaluation and revision, and promotion with redundancy pruning. Evolution therefore changes reusable evidence-grounded skills and their applicability.

\subsection{Skill Evaluation and Update Protocol}
\label{app:skill-evaluation}
This protocol turns interactions into reusable knowledge without rewarding mediation success alone. Each invocation records context, route, evidence, dialogue, verification, cognition, and response; cycle completion adds allocation and $W^G$. Matched-context evaluation covers validity, concern resolution, recommendation quality, cognition, allocation, failures, and marginal $W^G$. Verified candidates above $\tau_J$ as evolution cycles are promoted; uncertain ones are revised and retested, while ineffective or redundant skills are removed under capacity $K_{\mathcal L}$. Cross-region reuse rechecks local evidence and capacity, and promotion requires repeated conditional performance.

\begin{table*}[!t]
\centering
\caption{Representative bilateral mediation case for a creator OPC. Here $q_{ir}$ is the region-level relevance, i.e., the mean of six policy-category perceptions for region $r$.}
\label{tab:policymediator_case}
\footnotesize
\setlength{\tabcolsep}{3pt}
\renewcommand{\arraystretch}{1.00}
\begin{tabularx}{\textwidth}{
@{}
>{\raggedright\arraybackslash}p{0.13\textwidth}
>{\raggedright\arraybackslash}X
>{\raggedright\arraybackslash}p{0.285\textwidth}
@{}}
\toprule
\multicolumn{3}{c}{\textbf{Simulated OPC Dialogue: Policy Explanation and Trust Building}}\\
\midrule
\textbf{Policy context} & \textbf{OPC type:} creator. \textbf{Policy package:} 34\% price subsidy, 58\% computing support, and a 12-month rent-free period. & Baseline $z=.45$; $b=.42$; $q_{i,\mathrm{Shenzhen}}=.475$. \\
\textbf{Turn 0} & \textbf{Mediator---concern elicitation:} ``Which step worries you most: reimbursement, approval, or something else? We will answer from verified policy facts.'' \par \textbf{OPC:} ``Is computing support actually usable or only a paper promise? Show me a signed quota or access agreement.'' & $z=.40$; $b=.32$; $q_{i,\mathrm{Shenzhen}}=.458$. Delivery uncertainty emerges; trust and regional relevance decline. \\
\textbf{Turn 1} & \textbf{Mediator---verified evidence:} ``The Commission publishes redemption records for computing vouchers and rent-free space. Vouchers cover actual usage, while guaranteed loans require guarantor approval.'' \par \textbf{OPC:} ``The records reduce voucher risk, but I still need the local approval rate for the guaranteed loan.'' & $z=.48$; $b=.37$; $q_{i,\mathrm{Shenzhen}}=.480$. Voucher evidence reduces delivery concerns, while loan approval remains uncertain. \\
\textbf{Turn 2} & \textbf{Mediator---procedural clarification:} ``Voucher and space support are documented, but the loan's no-penalty clause remains unverified. You may apply for verified support while seeking written loan terms.'' \par \textbf{OPC:} ``I will start those two applications but defer the loan until I verify the no-penalty clause.'' & $z=.56$; $b=.42$; $q_{i,\mathrm{Shenzhen}}=.495$. Verified information restores trust despite remaining loan uncertainty. \\
\textbf{Observed outcome} & Final stance: \textbf{Conditional acceptance}; confidence $=75\%$; mediation success: \textbf{True}. & Net update: $\Delta z=+.11$, $\Delta b=.00$, and $\Delta q_{i,\mathrm{Shenzhen}}=+.020$. Each per-turn update satisfies Eq.~\eqref{eq:cognitive-transition}. \\
\midrule
\multicolumn{3}{c}{\textbf{Bilateral Mediator Feedback Trace}}\\
\midrule
\textbf{OPC-facing route} & \textbf{Context:} Shenzhen policy service for a low-trust, moderately aware OPC. \par \textbf{Initial route:} Elicit delivery concerns before presenting verified policy evidence. & Concern elicitation identifies delivery risks without pressuring immediate entry. \\
\textbf{Route switching} & After the OPC requests verifiable records, mediation shifts from concern elicitation to central-route reasoning about voucher delivery and loan eligibility. & Evidence-based clarification addresses the stated concerns while preserving uncertainty. \\
\textbf{Government feedback} & \textbf{Mediator $\rightarrow$ government:} ``Public records address voucher delivery, but the loan's no-penalty clause remains unverified. Publish the legal terms before making claims.'' & The unresolved legal boundary, rather than the subsidy level, is returned to policy learning. \\
\textbf{Next-cycle advice} & Retain existing support and request verified loan terms before recommending further policy-intensity adjustments. & Advisory only; the regional actor makes formal updates. The trace is \emph{concern $\rightarrow$ verified response $\rightarrow$ government feedback $\rightarrow$ advice}. \\
\bottomrule
\end{tabularx}
\end{table*}

\section{Experiments}
\subsection{Implementation Details}
\label{app:zeroth-order-updates}
\textbf{Environment.} Experiments use Python 3.10.20, PyTorch 2.11.0/CUDA 13.0, vLLM 0.22.0, and Transformers 5.9.0 on four A100-80GB GPUs; an RTX A400 supports preprocessing. The three frozen OPC backbones are served sequentially; Ministral uses an offline bfloat16 conversion of its official FP8 checkpoint. FlashInfer sampling and Qwen thinking mode are disabled for reproducible JSON output.

\textbf{Interactive evaluation.}
Before annotation, evaluators completed common rubric training. Matched outputs were presented in randomized order without method or backbone identities, and all comparators received the same verified policy evidence. 
Ties were allowed and can be calculated based on Table 4 and evaluators were compensated above the local statutory minimum hourly wage.

\textbf{Zeroth-order policy learning.} Each region uses a lightweight deterministic actor, and all actors share the realized $W^G$. One Gaussian direction jointly perturbs all actor parameters; each $\pm$ pair starts from the same saved state and seed, while state and skill evolution between cycles contributes to $V_T$. The return difference updates each actor by Eq.~\eqref{eq:zeroth-order-policy-gradient}.
Table~\ref{tab:hyperparameters} summarizes the released settings. Compared methods use matched populations, initial states, feasible policy ranges, and random seeds.

\begin{table}[t]
\centering
\caption{Simulation and policy-learning hyperparameters.}
\label{tab:hyperparameters}
\footnotesize
\setlength{\tabcolsep}{2pt}
\renewcommand{\arraystretch}{0.96}
\begin{tabular*}{\columnwidth}{@{\extracolsep{\fill}}lc@{}}
\toprule
\textbf{Setting} & \textbf{Value / range} \\
\midrule
OPCs per run & $100$ \\
Type proportions (S1--S4) & $(.25,.25,.20,.20,.10)$ \\
Type proportions (S5) & $(.40,.10,.10,.30,.10)$ \\
Entry-cost multiplier / $\rho$ (S1--S5)
& $(1,1.8,1,.5,1)/(1,1,.5,1,1)$ \\
Outside-option coefficient / reporting scale $\zeta_W$ & $.20/100$ \\
Cycles / actor step / perturbation radius & $12/.30/.02$ \\
$H$ / $\tau_i$ / $\delta_\theta^{\max}$ & $5/.10/.15$ \\
$\delta_{\mathrm{LLM}}^{\max}$
& $\leq .05$ per component \\
Stability thresholds $(\epsilon_a,\epsilon_\theta,\epsilon_W)$
& $.03/.02/.15$ \\
$M_{\mathrm{stab}}$ / warm-up & $3/3$ cycles \\
Held-out skill cases & $240$ \\
Skill-library capacity $K_L$ & $20$ \\
Evolution cycles / quality / mutation / keep & $20/.60/.30/.75$ \\
LLM temperatures (cognition/dialogue/JSON) & $.30/.40/.10$ \\
\bottomrule
\end{tabular*}
\end{table}
\subsection{Baselines}
The baselines cover nonadaptive controls, online learning, objective-guided optimization, and consensus formation under matched populations, states, feasible bounds, and seeds. \textbf{No Policy} sets all policy intensities to zero and omits policy explanation. \textbf{Static Welfare} is our one-shot control: it searches feasible packages at the initial state and then fixes the best package, following the static-comparison role rather than the learning algorithm of AI Economist~\cite{zheng2022aieconomist}. \textbf{Contextual Bandit} adapts PolicySim's online-intervention setting and uses upper-confidence exploration over fixed policy-adjustment arms~\cite{li2010contextual,huang2026policysim}. \textbf{Carbon-PPO}, where ``Carbon'' denotes its carbon-market source domain, adapts the government planner of Wang et al.~\cite{wang2024carbon} to continuous regional-policy adjustments with clipped PPO updates. \textbf{Nash Welfare} replaces the aggregate objective with the geometric mean of type-level utilities~\cite{zhang2024noregret}. \textbf{Belief-Robust} adapts BIRDM by reweighting perturbed OPC-response scenarios and selecting the lowest-regret package~\cite{ciullo2023birdm}. \textbf{Bi-level Consensus} balances subgroup opinion adjustment and consensus cost under social influence~\cite{tu2024bilevel}. \textbf{CONSENSAGENT} uses repeated LLM discussions with sycophancy mitigation~\cite{pitre2025consensagent}. 

\subsection{Preliminary Study}
The preliminary study checks whether the simulator produces interpretable heterogeneity before evaluating PolicyMediator. With observed policies fixed, the simulated six-city ranking agrees with a public-data-derived entry-friendliness proxy (Spearman $\rho=.943$, $p=.0048$). Return decomposition shows that economic and industrial conditions establish most of the cross-city gradient, while policy support modifies rather than overturns structural suitability. Across OPC types within a city, initial cognitive assessments span up to $.30$, confirming that the profile model does not collapse heterogeneous firms to a common response.

Two further diagnostics qualify the ranking result. In a larger 500-OPC stress test, the four leading cities settle at similar entry shares of $22$--$24\%$ despite different initial attractiveness, because congestion redirects later entrants. Moreover, mean within-city type-level support balance ($.820$) exceeds cross-city spatial balance ($.774$); Hefei reaches high type-level balance ($.854$) while one professional-support type still has negative net utility ($-.09$). Thus balanced support can coexist with uniformly weak outcomes, and entry dispersion or balance alone should not be interpreted as welfare improvement. 

\section{Proofs}
\subsection{Existence of a Stationary Optimal Leader Policy}
\label{app:proof-stationary-leader}
\begin{proof}[Proof of Proposition~\ref{prop:stationary-leader-policy}]
For finite $A$ and $\tau_i>0$, Eq.~\eqref{eq:logit-response} has a positive denominator; payoff continuity thus makes each $\rho_i(\cdot\mid s,\mathbf u)$ and $\rho=\prod_i\rho_i$ continuous. Define
\begin{equation}
\begin{aligned}
\bar R^\rho(s,\mathbf u)
&=\sum_{\mathbf a}\rho(\mathbf a\mid s,\mathbf u)
W^G(s,\mathbf u,\mathbf a),\\
\bar P^\rho(\cdot\mid s,\mathbf u)
&=\sum_{\mathbf a}\rho(\mathbf a\mid s,\mathbf u)
P(\cdot\mid s,\mathbf u,\mathbf a).
\end{aligned}
\end{equation}
The assumptions make $\bar R^\rho$ bounded and upper semicontinuous and $\bar P^\rho$ weakly continuous. For feasible set $\mathcal U(s)$, the leader has Bellman operator
\begin{equation}
(\mathcal TV)(s)=
\max_{\mathbf u\in\mathcal U(s)}
\left\{\bar R^\rho(s,\mathbf u)
+\delta_G\int V(s')\,\bar P^\rho(ds'\mid s,\mathbf u)\right\}.
\end{equation}
Because $\delta_G<1$, $\mathcal T$ is a contraction on bounded value functions. Compact feasibility and the continuity conditions ensure that the maximum is attained and admits a measurable stationary selector $\pi_G^\star$. Its fixed point equals the response-induced return in Eq.~\eqref{eq:leader-induced-return}; hence $\pi_G^\star$ is optimal under the fixed logit response kernel.
\end{proof}

\FloatBarrier
\subsection{Zeroth-Order Gradient Estimation and Stationarity}
\label{app:proof-zeroth-order}
\begin{proof}[Proof of Proposition~\ref{prop:zeroth-order-stationarity}]
Condition on the saved state at cycle $k$. Unbiased rollouts, Gaussian symmetry, and Stein's identity give
\begin{equation}
\mathbb E_k[\widehat{\mathbf g}_k]
=\frac1h\mathbb E_{\boldsymbol\zeta}
[J_k(\boldsymbol\omega^k+h\boldsymbol\zeta)\boldsymbol\zeta]
=\nabla J_{k,h}(\boldsymbol\omega^k).
\end{equation}
If returns lie in a common interval of width $R$, centering them at its midpoint yields $L_h\leq R/(2h^2)$. Moreover, $\|\widehat{\mathbf g}_k\|^2\leq R^2\|\boldsymbol\zeta\|^2/(4h^2)$, so $\sigma_g^2\leq R^2D/(4h^2)$. The smoothness inequality and $\alpha L_h\leq1$ imply
\begin{equation}
\mathbb E_k[J_{k,h}(\boldsymbol\omega^{k+1})]
\geq J_{k,h}(\boldsymbol\omega^k)
+\frac{\alpha}{2}\|\nabla J_{k,h}(\boldsymbol\omega^k)\|^2
-\frac{L_h\alpha^2}{2}\sigma_g^2.
\end{equation}
Summing over $k$, telescoping the within-cycle gains, and bounding changes between consecutive saved-state objectives by $V_T$ proves Eq.~\eqref{eq:zeroth-order-stationarity}, with $\Delta=\sup_{k,\boldsymbol\omega}J_{k,h}(\boldsymbol\omega)-J_{0,h}(\boldsymbol\omega^0)$. If $\sum_k\alpha_k=\infty$, $\sum_k\alpha_k^2<\infty$, and total drift is summable, standard stochastic approximation gives $\|\nabla J_{k,h}(\boldsymbol\omega^k)\|\to0$ almost surely; the parameters themselves need not converge.
\end{proof}

\section{Ethical Considerations}
A central ethical concern of this work is preserving human agency in AI-assisted regional policymaking. PolicyMediator is designed to support policy deliberation and OPC location decisions, rather than replace governmental authority or entrepreneurial choice. Accordingly, policy adjustments should remain subject to institutional review, while OPCs retain autonomy over entry and relocation decisions. Moreover, optimizing aggregate government welfare does not necessarily ensure equitable benefits across regions or OPC types, making transparent assessment of distributional outcomes essential. From this perspective, our framework emphasizes informed coordination rather than enforced agreement. Finally, simulated outcomes provide analytical insights rather than direct evidence of real-world policy effectiveness, and practical deployment requires stakeholder evaluation and field validation.

\end{document}